\documentclass[journal]{IEEEtran}
\usepackage{cite}
\usepackage{longtable}
\usepackage{algorithmic}
\usepackage{graphicx}
\usepackage{caption}
\usepackage{textcomp}
\usepackage{enumerate}
\usepackage{latexsym}
\usepackage{amsmath}
\usepackage{amsthm}
\usepackage{amssymb}
\usepackage{epsfig}
\usepackage{overpic}
\usepackage{multirow}
\usepackage[table]{xcolor}
\usepackage{colortbl,array}
\usepackage{multirow,bigdelim}
\usepackage{hyperref}
\usepackage{subfigure}
\usepackage{booktabs}
\usepackage{tabularx}
\usepackage{fancyhdr}
\usepackage{nomencl}
\usepackage[flushleft]{threeparttable}
\usepackage{figsize}
\usepackage{mathtools}
\usepackage{algorithm}
\usepackage{soul}
\usepackage[utf8]{inputenc}
\usepackage[english]{babel}
\usepackage{arydshln}
\usepackage{tikz}
\DeclareMathOperator{\diag}{diag}
\DeclareMathOperator{\blkdiag}{blkdiag}
\DeclareMathOperator{\sign}{sign}
\newtheorem{thm}{Theorem}

\newtheorem{rem}{Remark}
\newtheorem{cor}{Corollary}
\newtheorem{asm}{Assumption}

\newtheorem{ex}{Example}

\def\BibTeX{{\rm B\kern-.05em{\sc i\kern-.025em b}\kern-.08em
    T\kern-.1667em\lower.7ex\hbox{E}\kern-.125emX}}
\begin{document}
\title{Multidimensional Opinion Dynamics with Confirmation Bias: A Multi-Layer Framework}
\author{M.Hossein~Abedinzadeh, \IEEEmembership{Student Member, IEEE} and~Emrah Akyol, \IEEEmembership{Senior Member, IEEE}
\thanks{M.Hossein~Abedinzadeh and E.~Akyol are with the Department of Electrical and Computer Engineering, Binghamton University--SUNY. e-mail: \{mabedin3, eakyol\}@binghamton.edu.}
\thanks{This research is supported by the NSF via  (CAREER) CCF \#2048042.}
\thanks{This paper was presented in part in Allerton Conference 2023.}
}

\maketitle

\begin{abstract}

We study multidimensional opinion dynamics under confirmation bias in social networks. Each agent holds a vector of correlated opinions across multiple topic layers. Peer interaction is modeled through a static, informationally symmetric social channel, while external information enters through a dynamic, informationally asymmetric source channel. Source influence is described by nonnegative state-dependent functions of agent--source opinion mismatch, which captures confirmation bias without hard thresholds. For general Lipschitz source-influence functions, we give sufficient conditions under which the dynamics are contractive and converge to a unique steady state independent of the initial condition. For affine confirmation-bias functions, we show that the steady state can be computed through a finite sign-consistency search and identify a regime in which it admits a closed form. For broader classes of bounded nonlinear source-influence functions, we derive explicit lower and upper bounds on the fixed point. Numerical examples and a study on a real-world adolescent lifestyle network illustrate the role of multidimensional coupling and show that source-design conclusions can change qualitatively when confirmation bias is ignored.

\end{abstract}
\begin{IEEEkeywords}
confirmation bias, multidimensional opinion dynamics, social networks, multilayer networks.
\end{IEEEkeywords}

\section{Introduction}
\label{sec:introduction}

Empirical studies in the social sciences suggest that individuals often form opinions on different issues in a correlated fashion; opinion states therefore evolve as vectors rather than as isolated scalars \cite{dellaposta2015liberals,block2014multidimensional,baumann2021emergence}. For example, \cite{dellaposta2015liberals} documents a strong association between political ideology and lifestyle choices and shows that even small initial differences can be amplified over time through homophily and social influence. While such phenomena are well documented empirically, mathematical models that simultaneously capture multidimensional opinions, inter-topic coupling, and selective exposure to information remain limited.

A key mechanism behind selective exposure is \emph{confirmation bias}, namely the tendency to seek, interpret, and weigh information in ways that reinforce pre-existing beliefs \cite{nickerson1998confirmation}. In networked information environments, confirmation bias can amplify polarization and facilitate the spread of misinformation \cite{del2016spreading,del2017modeling,del2018polarization}. The model studied in this paper is intended as a tractable building block for such settings, where agents interact both with peers and with external information sources \cite{8510828,10313365}.

Classical models of opinion dynamics include the DeGroot model \cite{degroot1974reaching}, the Friedkin--Johnsen (FJ) model \cite{friedkin1990social}, and the Hegselmann--Krause (HK) bounded-confidence model \cite{hegselmann2002opinion}. In DeGroot dynamics, each agent updates its opinion by averaging the opinions of its neighbors, which often leads to consensus. The FJ model addresses part of this limitation by incorporating innate opinions. The HK model takes a different route: agents interact only when their opinions are sufficiently close. This state dependence captures selective interaction, but it also introduces discontinuities that make analysis substantially more delicate. In the scalar homogeneous HK model, these discontinuities still permit finite-time convergence results \cite{lorenz2006consensus,blondel2009krause}, but they do not yield a general analytical characterization of the steady state. Moreover, the standard HK framework does not distinguish between ordinary agents and exogenous information sources.

That distinction is practically significant. Interactions among peers are typically informationally symmetric and relatively static: friends, neighbors, and coworkers influence one another through a social graph that is not highly sensitive to short-term opinion fluctuations. By contrast, interaction with information sources such as media outlets, thought leaders, or podcasts is informationally asymmetric, and its effective strength is often state dependent because individuals are more likely to attend to sources that are closer to their current views. The scalar model in \cite{8510828} captures this distinction by combining FJ-type interpersonal updates with confirmation-bias-driven source influence.

The present paper extends that idea to multidimensional opinions. This extension is not straightforward because of the interplay between state-dependent source influence and multiple opinion dimensions.  More broadly, multidimensional models can exhibit cross-topic effects that are invisible in any collection of decoupled scalar systems. Related multidimensional bounded-confidence models have been studied in \cite{de2022multi,parsegov2016novel,etesami2013termination,stamoulas2018convergence,laguna2003vector,fortunato2005vector}. These works focus primarily on HK-type interactions and their convergence properties. In contrast, the present paper studies fixed-point existence, uniqueness, computation, and approximation for a topic-layered FJ-type system with state-dependent confirmation-bias source influence.

From a multilayer-network viewpoint, the model is a restricted node-aligned multiplex system in the sense of \cite{kivela2014multilayer,boccaletti2014structure}: layers index correlated opinion dimensions, the same set of agents and sources appears in every layer, intra-layer social interaction is encoded by $W_l$, and inter-layer coupling is encoded by $\Lambda_{l,j}$. The diagonal structure of $\Lambda_{l,j}$ implies that cross-topic transfer acts within the same agent rather than across different agents. Thus the paper studies a structured multilayer model rather than the most general interlayer architecture.

Our model differs from the multidimensional HK literature in three main ways: i) confirmation bias acts only between agents and information sources, not between pairs of ordinary agents; ii) source influence is modeled through a continuous state-dependent function rather than a hard threshold, which enables tractable analysis in some cases; and iii) innate opinions are incorporated explicitly, as in the FJ model.

Building on the scalar model in \cite{8510828} and our preliminary conference version in \cite{10313365}, the main contributions of this paper are as follows:
\begin{itemize}
\item We formulate a topic-layered multiplex FJ-type model in which each agent holds a vector of correlated opinions, peer influence is coupled across topics, and source influence is state dependent through confirmation-bias weights.
\item For general nonnegative Lipschitz source-influence functions, we give sufficient conditions under which the dynamics are contractive and converge to a unique steady state independent of the initial condition.
\item For affine confirmation-bias functions, we show that the fixed point can be computed through a finite sign-consistency search and identify a regime in which it admits a closed form.
\item For bounded nonlinear source-influence functions, we derive explicit lower and upper bounds on the fixed point.
\item Through numerical examples and a real-world adolescent lifestyle network, we show that multidimensional coupling matters and that source-design conclusions can change qualitatively when confirmation bias is ignored.
\end{itemize}

Relative to \cite{10313365}, the present version adds a complete convergence analysis, a sharper multilayer interpretation, explicit fixed-point bounds, and a real-network source-design study.

The rest of the paper is organized as follows. Section \ref{Preliminaries} introduces notation and the network model. Section \ref{section: Opinion Dynamics Model} presents the multidimensional opinion dynamics. Section \ref{Convergence Analysis} establishes sufficient conditions for convergence to a unique steady state. Section \ref{section: Fixed Point Analysis} studies steady-state computation and approximation. Section \ref{Real Network Simulation} presents numerical experiments on a real-world network, and Section \ref{Conclusion} concludes the paper.
\begin{table*}[t]
\centering
\footnotesize
\setlength{\tabcolsep}{4pt}
\renewcommand{\arraystretch}{1.03}
\caption{Nomenclature}
\begin{tabularx}{\textwidth}{>{\raggedright\arraybackslash}p{0.30\textwidth}|>{\raggedright\arraybackslash}X}
$\mathbb{R}$, $\mathbb{R}_+$, $\mathbb{R}^{m\times n}$ 
& Real numbers, nonnegative real numbers, and $m\times n$ real matrices. \\

$\mathbb{I}$, $\mathbb{I}^n$, $\mathbb{I}^{m\times n}$ 
& The interval $[0,1]$, vectors in $[0,1]^n$, and matrices in $[0,1]^{m\times n}$. \\

$\mathbb{N}$, $\mathbb{N}_0$ 
& Positive integers and nonnegative integers. \\

$\top$, $\sign(\cdot)$ 
& Matrix transposition and sign function. \\

$\mathbf{x}\le \mathbf{y}$, $A\le B$ 
& Componentwise inequalities for vectors and matrices. \\

$\diag(\mathbf{x})$, $\blkdiag(A_1,\dots,A_q)$ 
& Diagonal matrix with diagonal $\mathbf{x}$, and block-diagonal matrix with blocks $A_1,\dots,A_q$. \\

$\Sigma_A$ 
& Diagonal matrix of row sums of $A$, i.e.,
$\Sigma_A \triangleq \diag(A\mathbf{1})$, where $\mathbf{1}$ has compatible dimension. \\

$A\circ B$ 
& Hadamard (entrywise) product of $A$ and $B$. \\

$I$, $\mathbf{0}$, $\mathbf{1}$ 
& Identity matrix, all-zero vector or matrix, and all-one vector, with dimension determined by context. \\

$\mathbf{x}(r)$, $A(r,a)$ 
& The $r$th entry of $\mathbf{x}$, and the $(r,a)$ entry of $A$. \\

$\mathbf{x}[t]$ 
& Value of $\mathbf{x}$ at discrete time $t$. \\

$\|\cdot\|_\infty$ 
& Infinity norm for vectors and induced infinity norm for matrices. \\

$H_i$, $T_k$ 
& Selection matrices that extract the opinion  of agent $i$ and source $k$ from the stacked states $\mathbf{x}$ and $\mathbf{y}$. \\

$\beta_l(i,k)$ 
& Source--agent mismatch score for agent $i$, source $k$, and layer $l$, defined in \eqref{eq:beta}. \\

$b_{l,i,k}(\mathbf{x})$, $B_l(\mathbf{x})$ 
& State-dependent source-influence weight and its matrix form in layer $l$. \\
\end{tabularx}
\label{Nomenclature}
\end{table*}

\section{Preliminaries}
\label{Preliminaries}
Scalars, vectors, and matrices are denoted by lowercase letters, bold lowercase letters, and capital letters, respectively. The remaining notation is summarized in Table~\ref{Nomenclature}.

Let $[q]\triangleq\{1,\dots,q\}$ and $\mathbb{I}\triangleq[0,1]$. The network has agents $\mathcal{V}=\{\mathrm{v}_1,\dots,\mathrm{v}_n\}$ and information sources $\mathcal{U}=\{\mathrm{u}_1,\dots,\mathrm{u}_m\}$. Set $
\mathcal{M}\triangleq \mathcal{V}\cup\mathcal{U}.$

Agent $\mathrm{v}_i$ has expressed opinion $\hat{\mathbf{x}}_i[t]\in\mathbb{I}^q$ and innate opinion $\hat{\mathbf{s}}_i\in\mathbb{I}^q$. The expressed opinion varies in time, whereas the innate opinion is fixed. Information source $\mathrm{u}_k$ has a fixed opinion vector $\hat{\mathbf{y}}_k\in\mathbb{I}^q$.

The multilayer graph is $\mathcal{G}=(\mathcal{M},[q],\mathcal{E})$. An edge $(\mathrm{m}_j,r,\mathrm{v}_i,l)\in\mathcal{E}$, where $\mathrm{m}_j\in\mathcal{M}$, means that node $\mathrm{m}_j$ in layer $r$ influences agent $\mathrm{v}_i$ in layer $l$. This notation is generic. In the dynamics below, peer influence may couple across layers through $\Lambda_{l,j}W_j\mathbf{x}_j[t]$, while source influence acts within each layer through $B_l(\mathbf{x})\mathbf{y}_l$.

For each layer $l\in[q]$, define
\begin{align*}
\mathbf{x}_l[t] &\triangleq [\hat{\mathbf{x}}_1[t](l),\dots,\hat{\mathbf{x}}_n[t](l)]^\top \in \mathbb{I}^n,\\
\mathbf{s}_l &\triangleq [\hat{\mathbf{s}}_1(l),\dots,\hat{\mathbf{s}}_n(l)]^\top \in \mathbb{I}^n,\\
\mathbf{y}_l &\triangleq [\hat{\mathbf{y}}_1(l),\dots,\hat{\mathbf{y}}_m(l)]^\top \in \mathbb{I}^m.
\end{align*}
We then stack the layer-wise vectors as
\begin{align*}
\mathbf{x}[t] &\triangleq [\mathbf{x}_1[t]^\top,\dots,\mathbf{x}_q[t]^\top]^\top \in \mathbb{I}^{nq},\\
\mathbf{s} &\triangleq [\mathbf{s}_1^\top,\dots,\mathbf{s}_q^\top]^\top \in \mathbb{I}^{nq},\\
\mathbf{y} &\triangleq [\mathbf{y}_1^\top,\dots,\mathbf{y}_q^\top]^\top \in \mathbb{I}^{mq}.
\end{align*}

Let $H_i\in\{0,1\}^{q\times nq}$ and $T_k\in\{0,1\}^{q\times mq}$ extract agent and source vectors:
\begin{align*}
\hat{\mathbf{x}}_i[t] = H_i\mathbf{x}[t],
\qquad
\hat{\mathbf{y}}_k = T_k\mathbf{y}.
\end{align*}

\begin{rem}
The vector $\mathbf{x}[t]$ denotes the state at time $t$. When no confusion arises, we suppress the time index.
\label{thm:rmark1}
\end{rem}

\section{Opinion Dynamics Model}
\label{section: Opinion Dynamics Model}
We consider the following model, adapted from the scalar formulation in \cite{8510828}:
\begin{align}
\mathbf{x}_l[t\!+\!1]
\!=\!V_l(\mathbf{x}[t])\mathbf{s}_l
\!+\!A_l\!\sum_{j=1}^{q}\Lambda_{l,j}W_j\mathbf{x}_j[t]
\!+\!A_lB_l(\mathbf{x}[t])\mathbf{y}_l
\label{eq:ABUR}
\end{align}

\begin{enumerate}[(1)]
\item \textit{Social susceptibility matrix} $A_l\in\mathbb{I}^{n\times n}$ is
\begin{align*}
A_l=\diag(\boldsymbol{\alpha}_l),
\end{align*}
where $\boldsymbol{\alpha}_l(i)$, the $i$th component of $\boldsymbol{\alpha}_l\in\mathbb{I}^n$, represents the social susceptibility of agent $\mathrm{v}_i$ in layer $l$.

\item \textit{Cross-topic coupling matrix} $\Lambda_{l,j}\in\mathbb{I}^{n\times n}$ is
\begin{align*}
\Lambda_{l,j}=\diag(\boldsymbol{\lambda}_{l,j}),
\qquad
\sum_{j=1}^{q}\Lambda_{l,j}=I,
\end{align*}
where $\boldsymbol{\lambda}_{l,j}(i)$ specifies how strongly agent $\mathrm{v}_i$ in layer $l$ uses the social aggregate formed in layer $j$.

\item \textit{Weighted adjacency matrix} of layer $l$ is $W_l\in\mathbb{I}^{n\times n}$.

\item \textit{State-dependent source-influence matrix} $B_l(\mathbf{x})\in\mathbb{R}_+^{n\times m}$ represents the influence of the information sources on layer $l$. Its $(i,k)$ entry is denoted by
\begin{align*}
b_{l,i,k}(\mathbf{x})\triangleq B_l(\mathbf{x})(i,k),
\end{align*}
which models the susceptibility of agent $\mathrm{v}_i$ to source $\mathrm{u}_k$ in layer $l$.

\item Finally,
\begin{align*}
V_l(\mathbf{x})
= I-A_l\bigl(I+\Sigma_{B_l(\mathbf{x})}\bigr),
\quad
\Sigma_{B_l(\mathbf{x})}
\triangleq \diag\bigl(B_l(\mathbf{x})\mathbf{1}_m\bigr)
\end{align*}
so that, under the normalization assumptions introduced below, \eqref{eq:ABUR} defines a convex-combination update rule.
\end{enumerate}

We next define the concatenated matrices
\begin{align*}
&A \triangleq \blkdiag(A_1,\dots,A_q), \quad  V(\mathbf{x}) = I - A\left(I+{\Sigma_{{B}}}(\mathbf{x})\right), 
 \\
&W\triangleq [W_{l,j}]_{l,j=1}^q, \quad \Sigma_{B(\mathbf{x})} \triangleq \blkdiag\!\big(\Sigma_{B_1(\mathbf{x})},\dots,\Sigma_{B_q(\mathbf{x})}\big), \\ &W_{l,j} \triangleq \Lambda_{l,j}W_j, \quad B(\mathbf{x}) \triangleq \blkdiag(B_1(\mathbf{x}),\dots,B_q(\mathbf{x})), 
\end{align*}
and write \eqref{eq:ABUR} in the stacked form
\begin{align}
\mathbf{x}[t+1]
&= f(\mathbf{x}[t])
\triangleq
V(\mathbf{x}[t])\mathbf{s}
+A\bigl(W\mathbf{x}[t]+B(\mathbf{x}[t])\mathbf{y}\bigr).
\label{eq:CBUR}
\end{align}
The block matrix $W$ is the supra-social operator of the topic-layered system.

Throughout the paper, we make the following assumptions.

\begin{asm}
For every $l\in[q]$ and $\mathbf{x}\in\mathbb{I}^{nq}$, the matrix $B_l(\mathbf{x})$ is nonnegative.
\label{asm:Positivity}
\end{asm}

\begin{asm}
For each $l\in[q]$, the matrix $W_l$ is row-stochastic.
\label{asm:Wrow}
\end{asm}

\begin{asm}
The matrix $A\bigl(I+\Sigma_{B(\mathbf{x})}\bigr)$ is row-sub-stochastic for every $\mathbf{x}\in\mathbb{I}^{nq}$.
\label{asm: ConvexCondition2}
\end{asm}

\begin{asm}
Each $b_{l,i,k}$ is Lipschitz on $\mathbb{I}^{nq}$ with respect to $\|\cdot\|_\infty$. Thus, for some $\mu_{l,i,k}\ge 0$,
\begin{align*}
|b_{l,i,k}(\mathbf{x})-b_{l,i,k}(\mathbf{z})|
\le
\mu_{l,i,k}\,\|\mathbf{x}-\mathbf{z}\|_\infty
\end{align*}
for all $\mathbf{x},\mathbf{z}\in\mathbb{I}^{nq}$.
\label{asm: LC}
\end{asm}

\begin{asm}
The contraction factor
\begin{align*}
\kappa
\triangleq
\max_{l\in[q],\,i\in[n]}
\alpha_l(i)\left(
1+
\sum_{k=1}^{m}\mu_{l,i,k}\,|\mathbf{y}_l(k)-\mathbf{s}_l(i)|
\right)
\end{align*}
satisfies $\kappa<1$.
\label{asm: CC}
\end{asm}

\begin{rem}
Under Assumptions~\ref{asm:Positivity}, \ref{asm:Wrow}, and \ref{asm: ConvexCondition2}, the map $f$ sends $\mathbb{I}^{nq}$ into itself.
\end{rem}

\begin{rem}
The term $\sum_{j=1}^{q}\Lambda_{l,j}W_j\mathbf{x}_j[t]$ represents cross-topic social evidence aggregation: when agent $\mathrm{v}_i$ updates opinion component $l$, it may use neighbors' opinions on other components $j$ as evidence. This is different from an internal self-coupling of agent $i$'s own coordinates. If $\Lambda_{l,j}=0$ for all $l\neq j$ and each mismatch weight is layer-local, e.g., $c_{l,i,k}=e_l$, then the model reduces to $q$ independent one-dimensional FJ-type systems.
\end{rem}

The significance of vector-valued modeling is illustrated in the following example.

\begin{figure}[ht]
  \centering
\begin{tikzpicture}[scale=1, every node/.style={font=\small}]

\fill[yellow!80,opacity=0.9] (-1,3.5) -- (4.2,3) -- (5.2,5) -- (0,5.5) -- cycle;
\node[rotate=-5] at (0.5,5.0) {\textbf{Layer 1}};

\node[circle,fill=blue!20,draw] (v1a) at (0,4.0) {$\mathrm{v}_1$};
\node[circle,fill=blue!20,draw] (v2a) at (4,4) {$\mathrm{v}_2$};
\node[circle,fill=red!20,draw] (u1a) at (2.5,4.6) {$\mathrm{u}_1$};

\draw[->] (v1a) edge[loop left] node[above] {0.5} ();
\draw[->] (v2a) edge[loop right] node[above] {0.5} ();
\draw[->] (v2a) -- (v1a);
\draw[->] (v1a) -- node[above] {0.5} (v2a);

\draw[->] (u1a) -- node[above] {$b_{1,2,1}$} (v2a);
\draw[->] (u1a) -- node[above] {$b_{1,1,1}$} (v1a);

\fill[green!90,opacity=0.9] (-1.0,1.0) -- (4.2,0.5) -- (5.2,2.5) -- (0,3) -- cycle;
\node[rotate=-5] at (0.5,2.5) {\textbf{Layer 2}};
\node[rotate=-5] at (2.0,3.5) {$\Lambda_{1,1}=\diag\{1,1\}$};

\node[circle,fill=blue!20,draw] (v1b) at (0,1.5) {$\mathrm{v}_1$};
\node[circle,fill=blue!20,draw] (v2b) at (4,1.5) {$\mathrm{v}_2$};
\node[circle,fill=red!20,draw] (u1b) at (2.5,2.3) {$\mathrm{u}_1$};

\draw[->] (v1b) edge[loop left] node[above] {0.5} ();
\draw[->] (v2b) edge[loop right] node[above] {0.5} ();
\draw[->] (v2b) -- (v1b);
\draw[->] (v1b) -- node[above] {0.5} (v2b);

\draw[->] (u1b) -- node[above] {$b_{2,2,1}$} (v2b);
\draw[->] (u1b) -- node[above] {$b_{2,1,1}$} (v1b);

\draw[->, thick, orange] (-1,1) -- node[midway, above, sloped, text=black] {$\Lambda_{1,2}=\mathbf{0}$} (-1,3.5);
\draw[->, thick, orange] (5.2,5) -- node[midway, above, sloped, text=black] {$\Lambda_{2,1}=\mathbf{0}$} (5.2,2.5);
\node[rotate=-5] at (2.0,1.0) {$\Lambda_{2,2}=\diag\{1,1\}$};
\end{tikzpicture}
\caption{Network used in Example~\ref{ex: Example 1} with two agents $\mathrm{v}_1$ and $\mathrm{v}_2$, and one information source $\mathrm{u}_1$.}
\label{fig:Example1_Graph}
\end{figure}
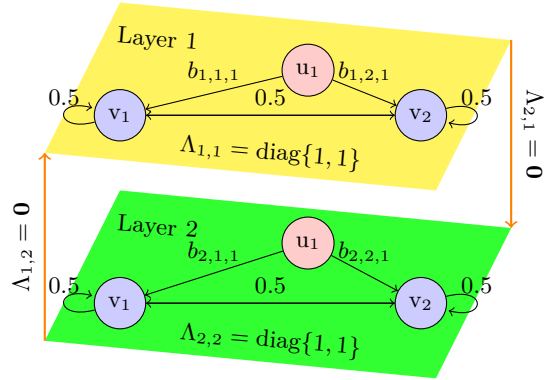

\begin{ex}
Consider $n=q=2$ and $m=1$. Let
\begin{align*}
W_1=W_2
&=
\begin{bmatrix}
0.5 & 0.5\\
0.5 & 0.5
\end{bmatrix},
&A_1=A_2&=\diag(0.8,0.8),\\
\Lambda_{1,1}=\Lambda_{2,2}
&=
\begin{bmatrix}
1 & 0\\
0 & 1
\end{bmatrix},
&\Lambda_{1,2}=\Lambda_{2,1}&=\mathbf{0}.
\end{align*}
Let
\begin{align*}
\hat{\mathbf{s}}_1&=\begin{bmatrix}0 & 0\end{bmatrix}^{\top},
&\hat{\mathbf{s}}_2&=\begin{bmatrix}0 & 0.5\end{bmatrix}^{\top},
&\hat{\mathbf{y}}_1&=\begin{bmatrix}1 & 1\end{bmatrix}^{\top}.
\end{align*}
Choose
\begin{align*}
b_{1,i,1}(\mathbf{x})
&=0.24-0.24\begin{bmatrix}0.95 & 0.05\end{bmatrix}|\hat{\mathbf{x}}_i-\hat{\mathbf{y}}_1|,\\
b_{2,i,1}(\mathbf{x})
&=0.24-0.24\begin{bmatrix}0.05 & 0.95\end{bmatrix}|\hat{\mathbf{x}}_i-\hat{\mathbf{y}}_1|.
\end{align*}
Now consider the decoupled approximation
\begin{align*}
\hat{b}_{1,i,1}(\mathbf{x})
&=0.24-0.24\begin{bmatrix}1 & 0\end{bmatrix}|\hat{\mathbf{x}}_i-\hat{\mathbf{y}}_1|,\\
\hat{b}_{2,i,1}(\mathbf{x})
&=0.24-0.24\begin{bmatrix}0 & 1\end{bmatrix}|\hat{\mathbf{x}}_i-\hat{\mathbf{y}}_1|.
\end{align*}
Since $\Lambda_{1,2}=\Lambda_{2,1}=0$, this example isolates \emph{source-side} multidimensionality: the two systems differ only through the cross-coordinate weights in the source terms. The second system consists of two independent scalar dynamics, one for each layer. The first system cannot be represented in that way. Figure~\ref{fig:Example1_Simulation} shows that the two systems align well only in the second layer, which confirms that separate one-dimensional models do not capture the full multidimensional behavior.
\label{ex: Example 1}
\end{ex}

\begin{figure}[t]
\centering
\includegraphics[scale=0.24]{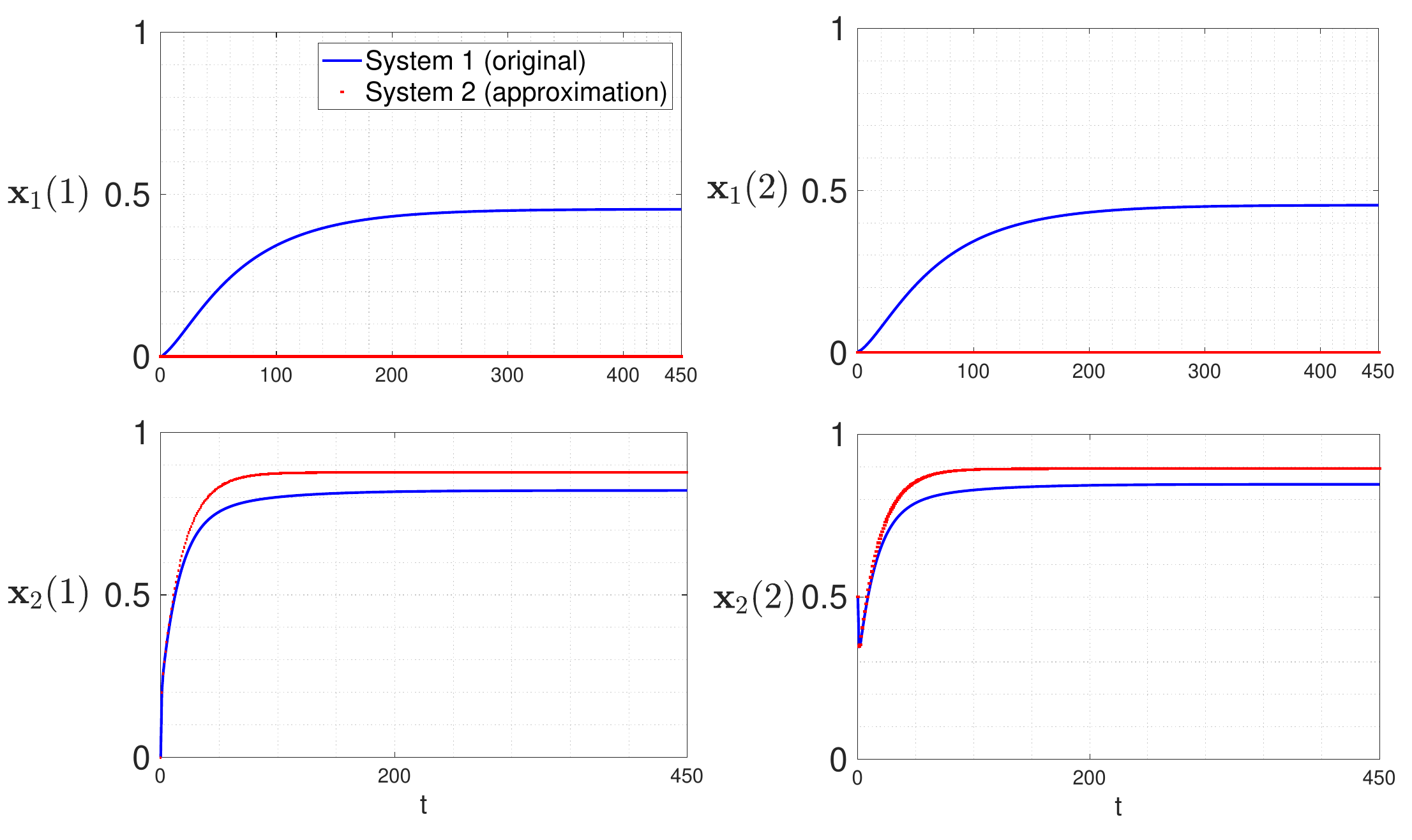}
\caption{Example~\ref{ex: Example 1}: true two-dimensional dynamics versus decoupled scalar approximation.}
\label{fig:Example1_Simulation}
\end{figure}

\begin{rem}
Equation~\eqref{eq:CBUR} can be written as
\begin{align*}
\mathbf{x}[t+1]
&=
\bigl(I-A(I+\Sigma_{B(\mathbf{x}[t])})\bigr)\mathbf{s}
+A\bigl(W\mathbf{x}[t]+B(\mathbf{x}[t])\mathbf{y}\bigr).
\end{align*}
The diagonal term $\Sigma_{B(\mathbf{x})}$ shifts weight from innate opinions to source influence. Larger source weights therefore reduce the effective resistance to external information.
\label{thm: rmark4}
\end{rem}

We next define a source--agent mismatch score and use it to model $B_l(\mathbf{x})$ in two relevant settings. For each $l\in[q]$, define the matrix $\beta_l\in\mathbb{R}_+^{n\times m}$ entrywise by
\begin{equation}
\beta_l(i,k)
\triangleq
c_{l,i,k}^{\top}|H_i\mathbf{x}-T_k\mathbf{y}|,
\label{eq:beta}
\end{equation}
where $c_{l,i,k}\in\mathbb{I}^q$ is nonnegative and satisfies $\mathbf{1}^{\top}c_{l,i,k}=1$. A small value of $\beta_l(i,k)$ means that agent $i$ is close to source $k$ and is therefore more receptive.

The first setting corresponds to a bounded-confidence function, where
\begin{equation}
b_{l,i,k}(\mathbf{x}) =
\begin{cases}
P_l(i,k), & \beta_l(i,k)\le \varepsilon_i,\\
0, & \text{otherwise},
\end{cases}
\label{eq: bounded confidence}
\end{equation}
while the second corresponds to an affine confirmation-bias function,
\begin{equation}
B_l(\mathbf{x}) = \Omega_l - \Gamma_l\circ \beta_l,
\label{eq: Affine function}
\end{equation}
where $P_l,\Omega_l,\Gamma_l\in\mathbb{R}_+^{n\times m}$ and $\varepsilon_i>0$.

For the affine rule, it is convenient to state explicit parameter-level sufficient conditions. Because $H_i\mathbf{x},T_k\mathbf{y}\in\mathbb{I}^q$ and $c_{l,i,k}$ is nonnegative with $\mathbf{1}^{\top}c_{l,i,k}=1$, we have
\begin{align*}
0\le \beta_l(i,k)\le 1
\end{align*}
for all $\mathbf{x}\in\mathbb{I}^{nq}$. Hence, if for every $\alpha_l(i)\neq 0$,
\begin{align}
0 \le \Gamma_l(i,k) \le \Omega_l(i,k),
\qquad
\sum_{k=1}^{m}\Omega_l(i,k)
\le
\frac{1-\alpha_l(i)}{\alpha_l(i)},
\label{eq:AffineSufficient1}
\end{align}
then
\begin{align*}
0\le b_{l,i,k}(\mathbf{x})\le \Omega_l(i,k).
\end{align*}
Therefore $B_l(\mathbf{x})$ is nonnegative and $A\bigl(I+\Sigma_{B(\mathbf{x})}\bigr)$ is row-sub-stochastic on $\mathbb{I}^{nq}$. Moreover, for the affine rule,
\begin{align*}
|b_{l,i,k}(\mathbf{x})-b_{l,i,k}(\mathbf{z})|
\le
\Gamma_l(i,k)\,\|\mathbf{x}-\mathbf{z}\|_\infty.
\end{align*}
A convenient stronger sufficient condition for Assumption~\ref{asm: CC} is
\begin{align}
\max_{l\in[q],\,i\in[n]}
\alpha_l(i)\left(
1+
\sum_{k=1}^{m}\Gamma_l(i,k)
\right)
<1.
\label{eq:AffineSufficient2}
\end{align}
We will enforce these uniform conditions in Section~\ref{Real Network Simulation}.

The next numerical example illustrates different behaviors under the bounded-confidence rule \eqref{eq: bounded confidence} and the affine rule \eqref{eq: Affine function} in a network of three agents discussing two interconnected issues.

\begin{figure}[t]
\centering
\begin{tikzpicture}[scale=1, every node/.style={font=\small}]

\fill[yellow!90,opacity=0.9] (-1,3.5) -- (4.2,3) -- (5.2,5) -- (0,5.5) -- cycle;
\node[rotate=-5] at (0.5,5.2) {\textbf{Layer 1}};
\node[rotate=-4] at (1.5,3.4) {$\Lambda_{1,1}=\diag\{0.7,0.5,0.5\}$};

\node[circle,fill=blue!20,draw] (v1a) at (0,4.0) {$\mathrm{v}_1$};
\node[circle,fill=blue!20,draw] (v2a) at (4.2,4.5) {$\mathrm{v}_2$};
\node[circle,fill=blue!20,draw] (v3a) at (3.8,3.5) {$\mathrm{v}_3$};
\node[circle,fill=red!20,draw] (u1a) at (2.6,4.8) {$\mathrm{u}_1$};

\draw[->] (v1a) edge[loop left] node[above] {0.2} ();
\draw[->] (v2a) edge[loop right] node[above] {1.0} ();
\draw[->] (v2a) -- node[above] {0.4} (v1a);
\draw[->] (v3a) -- node[above] {0.4} (v1a);
\draw[->] (v2a) -- node[right] {1.0} (v3a);

\draw[->] (u1a) -- node[above] {$b_{1,2,1}$} (v2a);

\fill[green!90,opacity=0.9] (-1.0,0.5) -- (4.2,0.0) -- (5.2,2.0) -- (0,2.5) -- cycle;
\node[rotate=-5] at (0.5,2.2) {\textbf{Layer 2}};
\node[rotate=-4] at (1.5,0.4) {$\Lambda_{2,2}=\diag\{0.7,0.5,0.5\}$};

\node[circle,fill=blue!20,draw] (v1b) at (0,1.0) {$\mathrm{v}_1$};
\node[circle,fill=blue!20,draw] (v2b) at (4.2,1.5) {$\mathrm{v}_2$};
\node[circle,fill=blue!20,draw] (v3b) at (3.8,0.5) {$\mathrm{v}_3$};
\node[circle,fill=red!20,draw] (u1b) at (2.6,1.8) {$\mathrm{u}_1$};

\draw[->] (v2b) edge[loop right] node[above] {1.0} ();
\draw[->] (v3b) edge[loop right] node[above] {1.0} ();
\draw[->] (v2b) -- node[above] {1.0} (v1b);

\draw[->] (u1b) -- node[above] {$b_{2,2,1}$} (v2b);

\draw[->, thick, orange] (-1,0.5) -- node[midway, above, sloped, text=black] {$\Lambda_{1,2}=\diag\{0.3,0.5,0.5\}$} (-1,3.5);
\draw[->, thick, orange] (5.2,5) -- node[midway, above, sloped, text=black] {$\Lambda_{2,1}=\diag\{0.3,0.5,0.5\}$} (5.2,2.0);
\end{tikzpicture}
\caption{Network used in Example~\ref{ex: Example 2} with three agents $\mathrm{v}_1$, $\mathrm{v}_2$, and $\mathrm{v}_3$, and one information source $\mathrm{u}_1$.}
\label{fig:Example2_graph}
\end{figure}
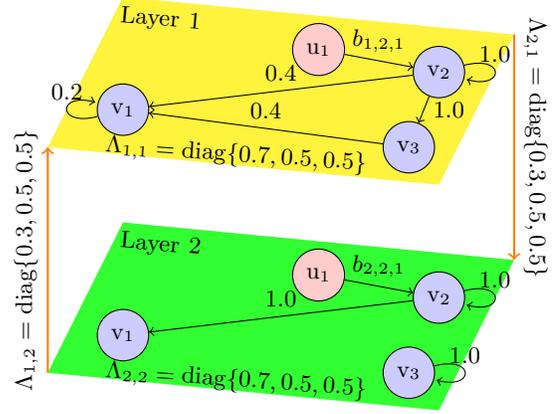

\begin{ex}
Consider three agents, one source, and two layers as shown in Fig.~\ref{fig:Example2_graph}. Let
\begin{align*}
W_1
&=
\begin{bmatrix}
0.2 & 0.4 & 0.4\\
0 & 1 & 0\\
0 & 1 & 0
\end{bmatrix},
&W_2
&=
\begin{bmatrix}
0 & 1 & 0\\
0 & 1 & 0\\
0 & 0 & 1
\end{bmatrix},
\end{align*}
\begin{align*}
A_1&=\diag(0.5,0.8,0.5),
&A_2&=\diag(0.5,0.6,0.5),
\end{align*}
\begin{align*}
\Lambda_{1,1}&=\Lambda_{2,2}=\diag(0.7,0.5,0.5), \\
\Lambda_{1,2}&=\Lambda_{2,1}=\diag(0.3,0.5,0.5).
\end{align*}
Let
\begin{align*}
\hat{\mathbf{s}}_1\!=\!\begin{bmatrix}0,\, 1\end{bmatrix}^{\top}\!, \,\,
\hat{\mathbf{s}}_2\!=\!\begin{bmatrix}0.5,\, 0.5\end{bmatrix}^{\top}\!, \,\,
\hat{\mathbf{s}}_3\!=\!\begin{bmatrix}0,\, 0\end{bmatrix}^{\top}\!, \,\,
\hat{\mathbf{y}}_1\!=\!\begin{bmatrix}1, \, 1\end{bmatrix}^{\top}\!.
\end{align*}
Set
\begin{align*}
\beta_1(1,1)&=\beta_1(3,1)=0,\qquad 
\beta_2(1,1)=\beta_2(3,1)=0,\\
\beta_1(2,1)
&=\begin{bmatrix}0.8  0.2\end{bmatrix}|\hat{\mathbf{x}}_2-\hat{\mathbf{y}}_1|, \\ 
\beta_2(2,1)
&=\begin{bmatrix}0.2 & 0.8\end{bmatrix}|\hat{\mathbf{x}}_2-\hat{\mathbf{y}}_1|.
\end{align*}
For the bounded-confidence rule, choose
\begin{align*}
P_1=P_2=\begin{bmatrix}0 & 0.25 & 0\end{bmatrix}^{\top}.
\end{align*}
Figure~\ref{fig:Example2_1} depicts $\mathbf{x}_1(2)$ and $\mathbf{x}_2(2)$ for $\varepsilon_2\in\{0,0.2,0.3,0.5\}$. When $\varepsilon_2=0$, both coordinates converge to a fixed point. Periodic orbits appear for intermediate thresholds. For $\varepsilon_2=0.5$, the system again converges to a fixed point.

For the affine rule, choose
\begin{align*}
\Gamma_1=\Gamma_2
&=
\gamma\begin{bmatrix}0 & 0.5 & 0\end{bmatrix}^{\top},
&\Omega_1=\Omega_2
&=
\begin{bmatrix}0 & 0.25 & 0\end{bmatrix}^{\top}.
\end{align*}
Figure~\ref{fig:Example2_2} shows a smooth transition between the two fixed-point regimes and no periodic orbits. This is consistent with the contractive affine regime studied in the next section.
\label{ex: Example 2}
\end{ex}

\begin{figure}[t]
\centering
\includegraphics[scale=0.44]{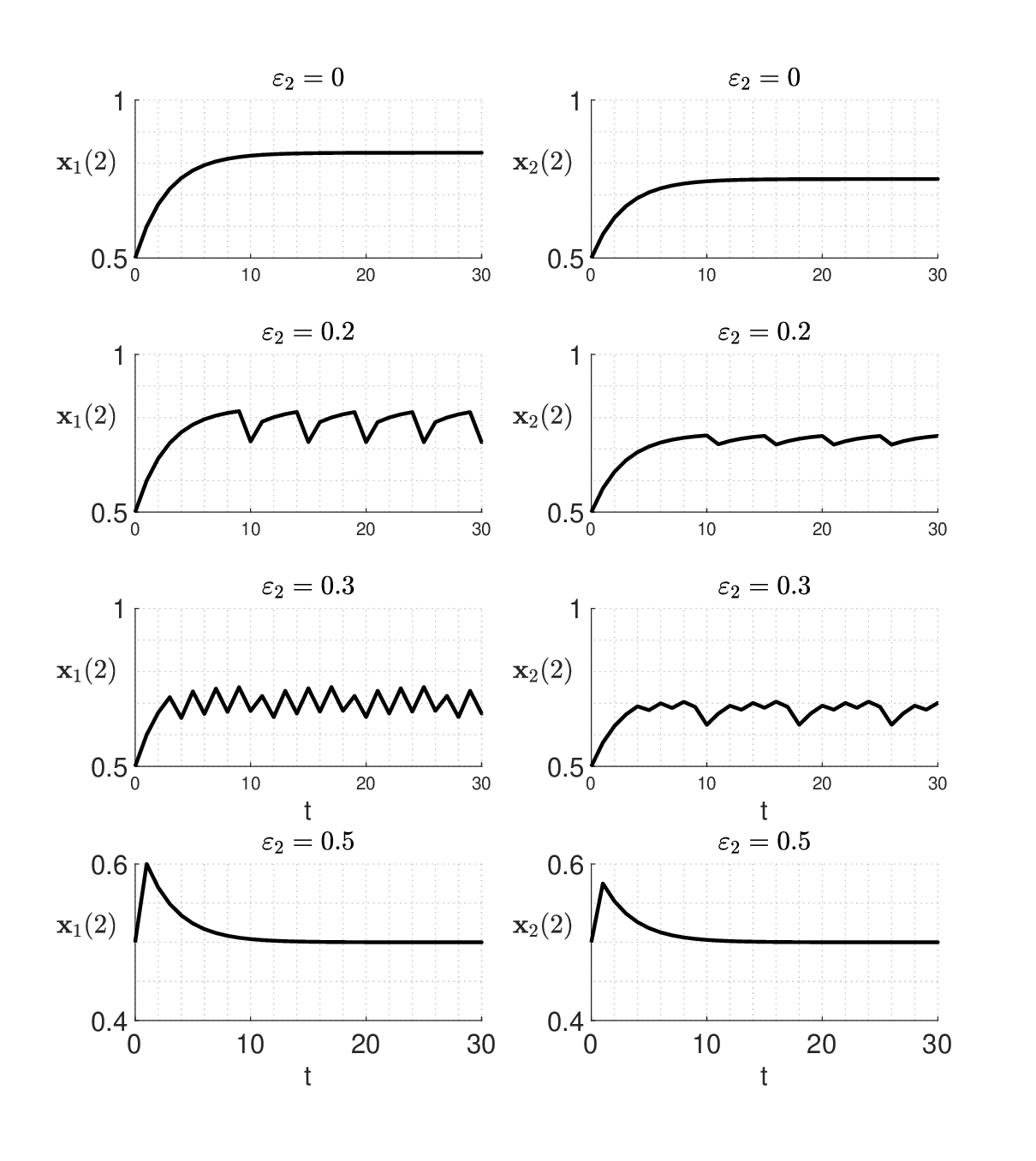}
\caption{Example~\ref{ex: Example 2}: bounded-confidence dynamics for several thresholds.}
\label{fig:Example2_1}
\end{figure}

\begin{figure}[t]
\centering
\includegraphics[scale=0.44]{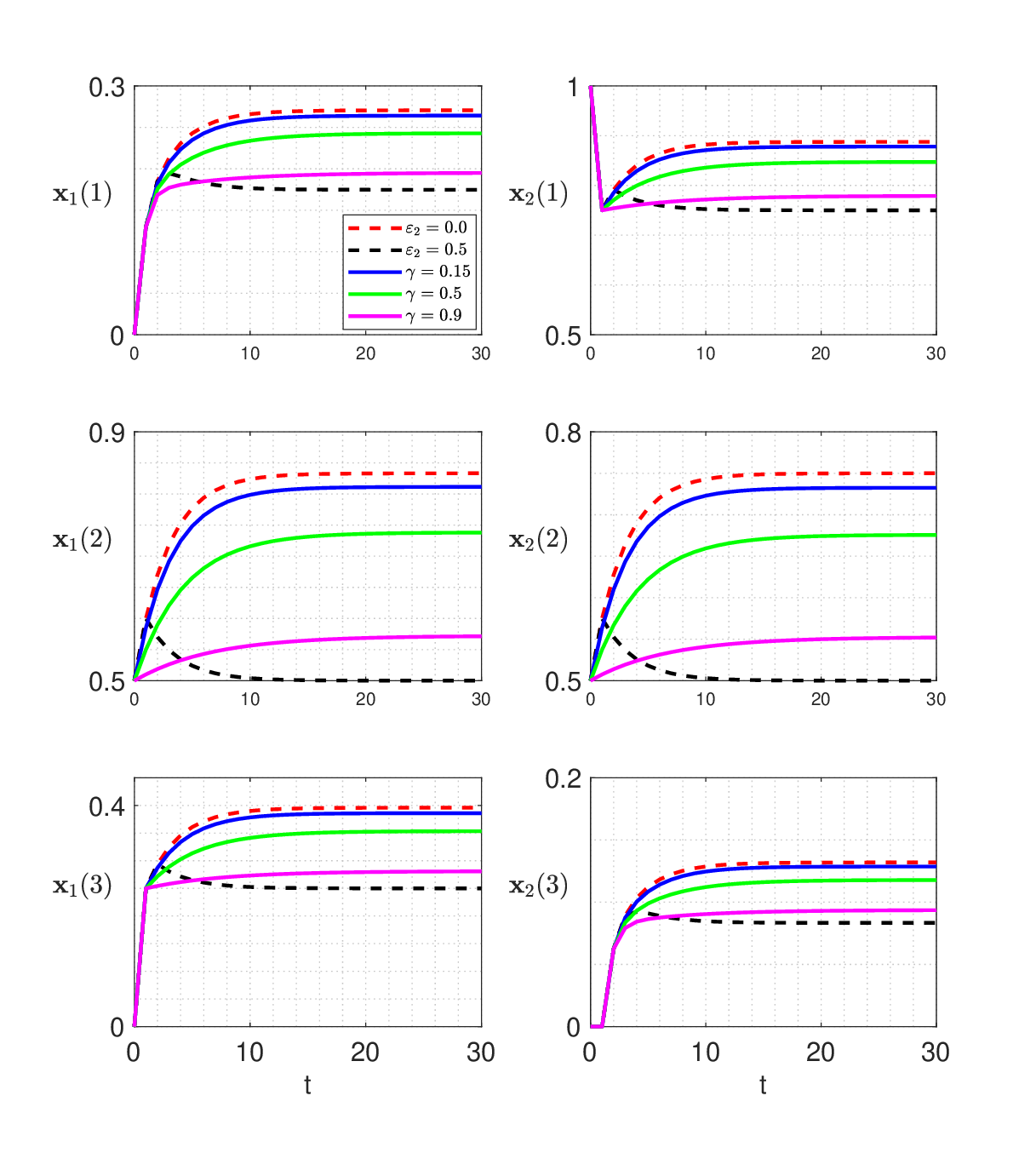}
\caption{Example~\ref{ex: Example 2}: affine confirmation-bias dynamics for several values of $\gamma$ and $\varepsilon$ combinations.}
\label{fig:Example2_2}
\end{figure}

\section{Convergence and Fixed-Point Analysis}

\subsection{Convergence Analysis}
\label{Convergence Analysis}
We next state sufficient conditions under which \eqref{eq:CBUR} is a contraction on
$\mathbb{I}^{nq}$ and hence admits a unique fixed point.

\begin{thm}
Under Assumptions~\ref{asm:Positivity}, \ref{asm:Wrow}, \ref{asm: ConvexCondition2},
\ref{asm: LC}, and \ref{asm: CC}, the map $f$ in \eqref{eq:CBUR} is a contraction on
$\mathbb{I}^{nq}$ under $\|\cdot\|_\infty$. Hence \eqref{eq:CBUR} admits a unique
fixed point $\mathbf{x}^*\in\mathbb{I}^{nq}$, and every trajectory converges to it.
The fixed point satisfies
\begin{equation}
\mathbf{x}^* = (I-A)\mathbf{s} + A\left(W\mathbf{x}^* + B(\mathbf{x}^*)\mathbf{y}
- \Sigma_{B(\mathbf{x}^*)}\mathbf{s}\right).
\label{eq:steady-state}
\end{equation}
\label{thm:th1}
\end{thm}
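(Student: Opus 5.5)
The approach is to verify that $f$ is a self-map of the complete metric space $(\mathbb{I}^{nq},\|\cdot\|_\infty)$, bound $\|f(\mathbf{x})-f(\mathbf{z})\|_\infty\le\kappa\|\mathbf{x}-\mathbf{z}\|_\infty$, and then invoke the Banach fixed-point theorem. The self-map property is the earlier remark following Assumption~\ref{asm: CC}. Row $(l,i)$ of $f(\mathbf{x})$ is a convex combination of $\mathbf{s}_l(i)$, the entries of $\mathbf{x}$, and the entries of $\mathbf{y}_l$. The weights are $1-\alpha_l(i)(1+\sum_k b_{l,i,k})$, $\alpha_l(i)\Lambda_{l,j}(i,i)W_j(i,\cdot)$ and $\alpha_l(i)b_{l,i,k}$. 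They are nonnegative by Assumptions~\ref{asm:Positivity} and \ref{asm: ConvexCondition2}, and they sum to one by Assumption~\ref{asm:Wrow} together with $\sum_j\Lambda_{l,j}=I$. Since $\mathbb{I}^{nq}$ is closed in $\mathbb{R}^{nq}$, it is complete.

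For the contraction estimate, I would first rewrite each coordinate so that the state-dependent terms appear only through differences with the innate opinion:
\begin{align*}
f(\mathbf{x})_{l,i} = (1-\alpha_l(i))\mathbf{s}_l(i) + \alpha_l(i)\sum_{j}\boldsymbol{\lambda}_{l,j}(i)\,(W_j\mathbf{x}_j)(i) + \alpha_l(i)\sum_{k} b_{l,i,k}(\mathbf{x})\bigl(\mathbf{y}_l(k)-\mathbf{s}_l(i)\bigr).
\end{align*}
This is exactly the rearrangement that yields \eqref{eq:steady-state}, since $V(\mathbf{x})\mathbf{s}=(I-A)\mathbf{s}-A\Sigma_{B(\mathbf{x})}\mathbf{s}$.

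Subtracting the same expression at $\mathbf{z}$ removes the constant first term. The social term is then bounded by $\alpha_l(i)\sum_j\boldsymbol{\lambda}_{l,j}(i)\|W_j\|_\infty\|\mathbf{x}-\mathbf{z}\|_\infty=\alpha_l(i)\|\mathbf{x}-\mathbf{z}\|_\infty$, using row-stochasticity of $W_j$ and $\sum_j\boldsymbol{\lambda}_{l,j}(i)=1$. The source term is bounded by $\alpha_l(i)\sum_k\mu_{l,i,k}|\mathbf{y}_l(k)-\mathbf{s}_l(i)|\,\|\mathbf{x}-\mathbf{z}\|_\infty$ via Assumption~\ref{asm: LC}. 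Taking the maximum over $(l,i)$ gives the factor $\kappa<1$ of Assumption~\ref{asm: CC}.

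The main subtlety is this rewriting step. If one bounds $V(\mathbf{x})\mathbf{s}$ and $B(\mathbf{x})\mathbf{y}$ separately, the Lipschitz constants add as $\mu(|\mathbf{y}|+|\mathbf{s}|)$ and the estimate loses the sharp factor $|\mathbf{y}_l(k)-\mathbf{s}_l(i)|$. Grouping $b_{l,i,k}$ with $(\mathbf{y}_l(k)-\mathbf{s}_l(i))$ is what produces $\kappa$ exactly.

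Banach's theorem then gives a unique fixed point $\mathbf{x}^*\in\mathbb{I}^{nq}$, and every trajectory converges to it geometrically: $\|\mathbf{x}[t]-\mathbf{x}^*\|_\infty\le\kappa^t\|\mathbf{x}[0]-\mathbf{x}^*\|_\infty$. Finally, \eqref{eq:steady-state} follows by substituting $\mathbf{x}^*=f(\mathbf{x}^*)$ into \eqref{eq:CBUR} and expanding $V(\mathbf{x}^*)=I-A-A\Sigma_{B(\mathbf{x}^*)}$.
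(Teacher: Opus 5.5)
Your proposal is correct and follows essentially the same route as the paper's proof. Both show the self-map property by writing each coordinate as a convex combination, and both get the coordinatewise bound by pairing $b_{l,i,k}$ with $\mathbf{y}_l(k)-\mathbf{s}_l(i)$, so that row-stochasticity and Assumption~\ref{asm: LC} yield the factor $\kappa$ before Banach's theorem is applied. Your explicit geometric rate and your derivation of \eqref{eq:steady-state} from $\mathbf{x}^*=f(\mathbf{x}^*)$ are small details the paper leaves implicit.
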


\begin{rem}
The choice of $\|\cdot\|_\infty$ is natural here. The social term is row-wise
averaging, so $\|\cdot\|_\infty$ is controlled by row sums, whereas $\ell_1$ is
controlled by column sums. The nonlinear source term also admits coordinatewise
Lipschitz bounds, which $\|\cdot\|_\infty$ turns into a maximum rather than a sum.
Other norms could be used in principle, but then the Lipschitz and contraction
conditions would have to be restated in that norm.
\end{rem}

\begin{rem}
Theorem~\ref{thm:th1} describes a globally contractive regime. In this regime, the
model cannot exhibit multiple attracting fixed points, periodic orbits, or HK-style
fragmentation driven by discontinuous confidence sets. Such behaviors can arise only
when the contraction condition is violated or when discontinuous mechanisms such as
\eqref{eq: bounded confidence} are used instead. In particular, the bounded-confidence
rule \eqref{eq: bounded confidence} is not covered by Theorem~\ref{thm:th1}.
\end{rem}

The following example illustrates convergence to a unique fixed point from arbitrary
initial conditions.

\begin{ex}
Take the parameters of Example~\ref{ex: Example 2} with the affine rule and
$\gamma=0.5$. They satisfy the row-wise contraction condition of
Theorem~\ref{thm:th1}; indeed, the largest row-wise factor is
\begin{align*}
0.8+0.8\times 0.25\times 0.5 = 0.9 < 1.
\end{align*}
Figure~\ref{fig:Example3} shows $500$ runs from random initial conditions. All
trajectories converge to the same fixed point.
\label{ex: Example 3}
\end{ex}

\begin{figure}[t]
\centering
\includegraphics[scale=0.5]{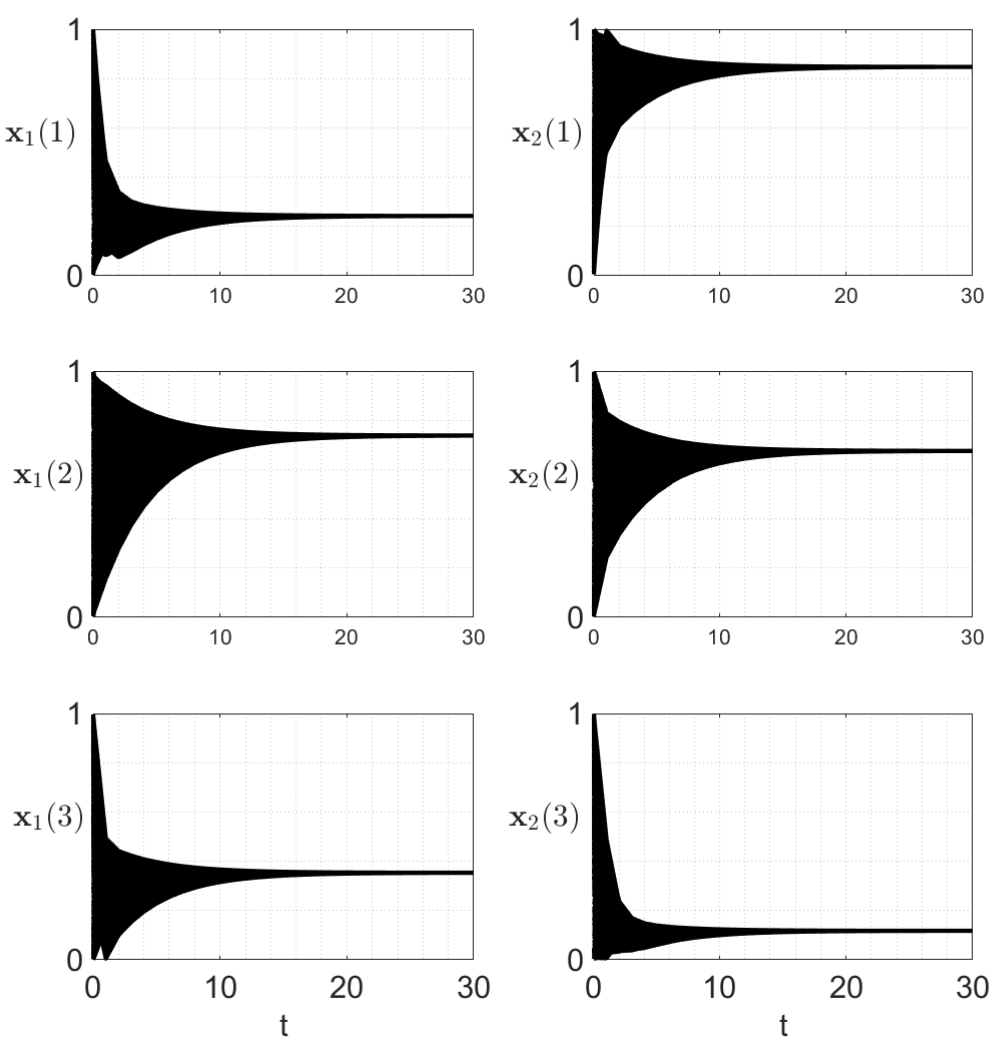}
\caption{Example~\ref{ex: Example 3}: convergence to one fixed point from $500$ random initial conditions.}
\label{fig:Example3}
\end{figure}

\subsection{Fixed-Point Analysis}
\label{section: Fixed Point Analysis}
We now analyze the fixed point. We first isolate the nonlinear fixed-point equation,
and then treat the affine and bounded cases.

\begin{cor}
Let $Z\triangleq (I-AW)^{-1}$. Under Theorem~\ref{thm:th1}, $Z$ is well defined and
the fixed point satisfies
\begin{equation}
\mathbf{x}^* = Z\Bigl((I-A)\mathbf{s} + AB(\mathbf{x}^*)\mathbf{y}
- A\Sigma_{B(\mathbf{x}^*)}\mathbf{s}\Bigr).
\label{eq:EPPNA}
\end{equation}
\label{thm:cor2}
\end{cor}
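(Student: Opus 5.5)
To prove the corollary we need two things: that $I-AW$ is invertible under the hypotheses of Theorem~\ref{thm:th1}, and that \eqref{eq:steady-state} can be rearranged into \eqref{eq:EPPNA}. The rearrangement is immediate once invertibility is known, so the only real content is the norm bound on $AW$.

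\emph{Step 1: bound $\|AW\|_\infty$.} The matrix $A=\blkdiag(A_1,\dots,A_q)$ is diagonal with entries $\alpha_l(i)\in\mathbb{I}$. Row $(l,i)$ of $W$ is the concatenation over $j$ of $\boldsymbol{\lambda}_{l,j}(i)$ times row $i$ of $W_j$. Each $W_j$ is row-stochastic by Assumption~\ref{asm:Wrow}, and $\sum_j \Lambda_{l,j}=I$ with nonnegative diagonal entries. Hence every row of $W$ is nonnegative and sums to $\sum_j\boldsymbol{\lambda}_{l,j}(i)=1$, so $W$ is row-stochastic and $\|W\|_\infty=1$. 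It follows that
\begin{align*}
\|AW\|_\infty=\max_{l,i}\alpha_l(i).
\end{align*}
Assumption~\ref{asm: CC} gives
\begin{align*}
\alpha_l(i)\le \alpha_l(i)\Bigl(1+\sum_k\mu_{l,i,k}|\mathbf{y}_l(k)-\mathbf{s}_l(i)|\Bigr)\le\kappa<1
\end{align*}
for every $l,i$, because the bracketed sum is nonnegative. Therefore $\|AW\|_\infty\le\kappa<1$.

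\emph{Step 2: invertibility.} Since $\|AW\|_\infty<1$, the spectral radius $\rho(AW)$ is strictly less than $1$. Thus $I-AW$ is nonsingular, and $Z=(I-AW)^{-1}=\sum_{t\ge0}(AW)^t$ by the Neumann series. As a by-product, $Z$ is entrywise nonnegative, which may be useful later.

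\emph{Step 3: rearrange the fixed-point equation.} By Theorem~\ref{thm:th1}, the fixed point $\mathbf{x}^*$ exists and satisfies \eqref{eq:steady-state}. Moving the term $AW\mathbf{x}^*$ to the left-hand side gives
\begin{align*}
(I-AW)\mathbf{x}^*=(I-A)\mathbf{s}+AB(\mathbf{x}^*)\mathbf{y}-A\Sigma_{B(\mathbf{x}^*)}\mathbf{s}.
\end{align*}
Multiplying both sides on the left by $Z$ yields \eqref{eq:EPPNA}.

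The only step needing care is Step 1: one must verify that the layer-coupled block matrix $W$ is row-stochastic and that Assumption~\ref{asm: CC} alone forces every $\alpha_l(i)<1$. No separate assumption on $A$ is required, since the contraction condition already enforces this. Nothing here uses the specific form of $B$, so the corollary is an implicit equation for $\mathbf{x}^*$ rather than a closed form.
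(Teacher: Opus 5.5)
Your proposal is correct and follows the paper's own argument. Row-stochasticity of each $W_j$ together with $\sum_j\Lambda_{l,j}=I$ gives $\|AW\|_\infty=\max_{l,i}\alpha_l(i)<1$ via Assumption~\ref{asm: CC}, so $\rho(AW)<1$, $I-AW$ is invertible, and rearranging the steady-state equation yields the identity. Your explicit observation that the factor multiplying $\alpha_l(i)$ in $\kappa$ is at least $1$ fills in the one step the paper leaves implicit.
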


\begin{proof}
Because each $W_j$ is row-stochastic and $\sum_{j=1}^q\Lambda_{l,j}=I$, the
$((l-1)n+i)$th row sum of $AW$ equals $\alpha_l(i)$. Hence
\begin{align*}
\|AW\|_\infty = \max_{l\in[q],\,i\in[n]} \alpha_l(i) < 1,
\end{align*}
where the strict inequality follows from Assumption~\ref{asm: CC}. Therefore
$\rho(AW)\le \|AW\|_\infty<1$, so $I-AW$ is nonsingular. Rearranging
\eqref{eq:steady-state} gives \eqref{eq:EPPNA}.
\end{proof}

\medskip
\subsubsection{Fixed Point Under the Affine Rule}
\label{subsection: Steady-State Condition in Presence of Affine Functions}
For the affine rule, the only nonlinearity in \eqref{eq:CBUR} comes from the sign
pattern of $H_i\mathbf{x}-T_k\mathbf{y}$. Once that pattern is fixed, the dynamics
become affine.

\begin{thm}
Assume \eqref{eq: Affine function}, Assumption~\ref{asm:Wrow}, and the parameter
condition \eqref{eq:AffineSufficient1}. Suppose also that
\begin{align}
\kappa_{\mathrm{aff}}
\triangleq
\max_{l\in[q],\,i\in[n]}
\alpha_l(i)\left(
1+\sum_{k=1}^m \Gamma_l(i,k)\,
|\mathbf{y}_l(k)-\mathbf{s}_l(i)|
\right)<1.
\label{eq:kappa-aff}
\end{align}
Then the fixed point $\mathbf{x}^*$ of \eqref{eq:CBUR} can be found by a finite
consistency search over sign patterns $\boldsymbol{\theta}_{i,k}\in\{-1,0,1\}^q$.
For each candidate pattern $\boldsymbol{\theta}$, the induced affine system has the
form
\begin{align*}
\mathbf{x}[t+1]
=
A_a\mathbf{s}
+
B_a(\boldsymbol{\theta})\mathbf{y}
+
W_a(\boldsymbol{\theta})\mathbf{x}[t],
\end{align*}
and satisfies
\begin{align*}
\|W_a(\boldsymbol{\theta})\|_\infty \le \kappa_{\mathrm{aff}}<1.
\end{align*}
Hence $I-W_a(\boldsymbol{\theta})$ is nonsingular for every candidate
$\boldsymbol{\theta}$. Consequently,
Algorithm~\ref{algo: Fixed Point Affine Functions} is well defined and returns the
unique fixed point after finitely many consistency checks.
\label{thm:th2}
\end{thm}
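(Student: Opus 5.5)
The plan is to reduce everything to Theorem~\ref{thm:th1} plus one explicit row-sum computation for the sign-frozen affine map. I first check that the affine rule satisfies the hypotheses of Theorem~\ref{thm:th1}. By \eqref{eq:AffineSufficient1} and the discussion following \eqref{eq:AffineSufficient2}, $0\le b_{l,i,k}(\mathbf{x})\le\Omega_l(i,k)$ on $\mathbb{I}^{nq}$. This gives Assumptions~\ref{asm:Positivity} and~\ref{asm: ConvexCondition2}, and Assumption~\ref{asm:Wrow} is assumed. Each $b_{l,i,k}$ is Lipschitz with constant $\mu_{l,i,k}=\Gamma_l(i,k)$, which is Assumption~\ref{asm: LC}. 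With this choice the factor $\kappa$ of Assumption~\ref{asm: CC} is exactly $\kappa_{\mathrm{aff}}<1$. Hence Theorem~\ref{thm:th1} applies, and $f$ has a unique fixed point $\mathbf{x}^*\in\mathbb{I}^{nq}$. What remains is to show that the sign search finds it.

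Next I fix a pattern $\boldsymbol{\theta}=(\boldsymbol{\theta}_{i,k})$ and replace $|H_i\mathbf{x}-T_k\mathbf{y}|$ by $\diag(\boldsymbol{\theta}_{i,k})(H_i\mathbf{x}-T_k\mathbf{y})$. This defines the affine map
\[
f_{\boldsymbol{\theta}}(\mathbf{x})=A_a\mathbf{s}+B_a(\boldsymbol{\theta})\mathbf{y}+W_a(\boldsymbol{\theta})\mathbf{x}.
\]
To read off $W_a(\boldsymbol{\theta})$, I rewrite row $(l,i)$ of \eqref{eq:CBUR} as
\[
(1-\alpha_l(i))\mathbf{s}_l(i)+\alpha_l(i)(W\mathbf{x})_{(l,i)}+\alpha_l(i)\sum_{k}b_{l,i,k}(\mathbf{x})\bigl(\mathbf{y}_l(k)-\mathbf{s}_l(i)\bigr).
\]
Substituting $b_{l,i,k}=\Omega_l(i,k)-\Gamma_l(i,k)\,c_{l,i,k}^\top\diag(\boldsymbol{\theta}_{i,k})(H_i\mathbf{x}-T_k\mathbf{y})$, the row of $W_a(\boldsymbol{\theta})$ is
\[
\alpha_l(i)\,W_{(l,i),:}-\alpha_l(i)\sum_k\Gamma_l(i,k)\bigl(\mathbf{y}_l(k)-\mathbf{s}_l(i)\bigr)c_{l,i,k}^\top\diag(\boldsymbol{\theta}_{i,k})H_i .
\]
I then bound its absolute row sum. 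The $W$ part contributes $\alpha_l(i)$, because the $W_j$ are row-stochastic and $\sum_j\Lambda_{l,j}=I$, as in the proof of Corollary~\ref{thm:cor2}. The source part contributes at most $\alpha_l(i)\sum_k\Gamma_l(i,k)|\mathbf{y}_l(k)-\mathbf{s}_l(i)|$, since $|\boldsymbol{\theta}|\le 1$ entrywise, $c_{l,i,k}\ge 0$, $\mathbf{1}^\top c_{l,i,k}=1$, and $H_i$ is a selection matrix. So $\|W_a(\boldsymbol{\theta})\|_\infty\le\kappa_{\mathrm{aff}}<1$. The Neumann series then makes $I-W_a(\boldsymbol{\theta})$ nonsingular, so the candidate
\[
\mathbf{x}_{\boldsymbol{\theta}}=(I-W_a(\boldsymbol{\theta}))^{-1}\bigl(A_a\mathbf{s}+B_a(\boldsymbol{\theta})\mathbf{y}\bigr)
\]
is well defined.

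The consistency check accepts $\mathbf{x}_{\boldsymbol{\theta}}$ iff two conditions hold. First, $\mathbf{x}_{\boldsymbol{\theta}}\in\mathbb{I}^{nq}$. Second, $\boldsymbol{\theta}_{i,k}(r)\,(H_i\mathbf{x}_{\boldsymbol{\theta}}-T_k\mathbf{y})(r)=|(H_i\mathbf{x}_{\boldsymbol{\theta}}-T_k\mathbf{y})(r)|$ for all $i,k,r$, that is, $\boldsymbol{\theta}$ reproduces the absolute values at $\mathbf{x}_{\boldsymbol{\theta}}$. Soundness follows because on the set where $\boldsymbol{\theta}$ reproduces the absolute values, $f$ and $f_{\boldsymbol{\theta}}$ coincide. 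An accepted $\mathbf{x}_{\boldsymbol{\theta}}$ therefore satisfies $f(\mathbf{x}_{\boldsymbol{\theta}})=\mathbf{x}_{\boldsymbol{\theta}}$ in $\mathbb{I}^{nq}$, and by uniqueness it equals $\mathbf{x}^*$. Completeness follows by taking $\boldsymbol{\theta}^*_{i,k}=\sign(H_i\mathbf{x}^*-T_k\mathbf{y})$. Then $\mathbf{x}^*$ is a fixed point of the contraction $f_{\boldsymbol{\theta}^*}$, so $\mathbf{x}^*=\mathbf{x}_{\boldsymbol{\theta}^*}$ and it passes the check. There are at most $3^{qnm}$ patterns, so the search terminates after finitely many checks.

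The main obstacle is the bookkeeping in the consistency test, not the contraction estimate. The test must handle zero components, where several $\boldsymbol{\theta}$ can be consistent; all of them then return the same $\mathbf{x}^*$. It must also include the cube-membership condition, because Theorem~\ref{thm:th1} guarantees uniqueness only on $\mathbb{I}^{nq}$. I will state the check with both conditions explicitly, so that any accepted candidate is provably $\mathbf{x}^*$.
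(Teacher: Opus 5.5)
Your proof follows the paper's route. You verify the hypotheses of Theorem~\ref{thm:th1} with $\mu_{l,i,k}=\Gamma_l(i,k)$, so that $\kappa=\kappa_{\mathrm{aff}}$. You then freeze the sign pattern, and your row formula agrees with $A(W+R^x(\boldsymbol{\theta}))$. The absolute row sum is bounded by $\kappa^{\mathrm{aff}}_{l,i}$, and soundness and completeness of the search follow from uniqueness. Your completeness step is more explicit than the paper's, and correct: $\mathbf{x}^*$ is a fixed point of the contraction $f_{\boldsymbol{\theta}^*}$, hence equals $\mathbf{x}_{\boldsymbol{\theta}^*}$.

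The cube-membership test is where your argument needs fixing. You are right that the paper's ``by uniqueness of the fixed point'' needs justification. Theorem~\ref{thm:th1} gives uniqueness only on $\mathbb{I}^{nq}$, and the sign check alone does not place $\mathbf{x}_{\boldsymbol{\theta}}$ in the cube. But adding the test changes the procedure. The theorem is about Algorithm~\ref{algo: Fixed Point Affine Functions}, which checks signs only, so as written you prove correctness of a different algorithm. Your claim that the extra test is necessary is also false.

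The clean repair is that, for the affine rule, the contraction is global:
\begin{itemize}
\item The estimate $\bigl|c_{l,i,k}^{\top}|H_i\mathbf{x}-T_k\mathbf{y}|-c_{l,i,k}^{\top}|H_i\mathbf{z}-T_k\mathbf{y}|\bigr|\le\|\mathbf{x}-\mathbf{z}\|_\infty$ holds for all $\mathbf{x},\mathbf{z}\in\mathbb{R}^{nq}$.
\item The row form you wrote for \eqref{eq:CBUR} gives $(f(\mathbf{x})-f(\mathbf{z}))_{(l,i)}=\alpha_l(i)\bigl[(W(\mathbf{x}-\mathbf{z}))_{(l,i)}+\sum_k(b_{l,i,k}(\mathbf{x})-b_{l,i,k}(\mathbf{z}))(\mathbf{y}_l(k)-\mathbf{s}_l(i))\bigr]$. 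This identity uses neither nonnegativity of $b_{l,i,k}$ nor membership in the cube.
\item Hence the formula-defined $f$ is a $\kappa_{\mathrm{aff}}$-contraction on all of $\mathbb{R}^{nq}$, and $\mathbf{x}^*$ is its only fixed point there.
\item A sign-consistent candidate satisfies $f(\mathbf{x}_{\boldsymbol{\theta}})=f_{\boldsymbol{\theta}}(\mathbf{x}_{\boldsymbol{\theta}})=\mathbf{x}_{\boldsymbol{\theta}}$. So it equals $\mathbf{x}^*$ and lies in $\mathbb{I}^{nq}$ automatically.
\end{itemize}

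With this observation, your soundness argument applies verbatim to the sign-only check of Algorithm~\ref{algo: Fixed Point Affine Functions}. Its exact test $\sign(H_i\mathbf{x}_{\mathrm{cand}}-T_k\mathbf{y})=\boldsymbol{\theta}_{i,k}$ implies your condition that $\boldsymbol{\theta}$ reproduces the absolute values, and under that test exactly one pattern passes. The same observation also closes the small hole in the paper's own proof.
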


\restylefloat{algorithm}
\begin{algorithm}[t]
\caption{Fixed-point computation under the affine rule,  $\Sigma_\Omega\triangleq \blkdiag(\Sigma_{\Omega_1},\dots,\Sigma_{\Omega_q})$ and $A_a\triangleq I-A-A\Sigma_\Omega$.}
\label{algo: Fixed Point Affine Functions}
\footnotesize
\begin{algorithmic}[1]
\renewcommand{\algorithmicrequire}{\textbf{Input:}}
\renewcommand{\algorithmicensure}{\textbf{Output:}}
\REQUIRE Finite candidate set $\mathcal{P}$ of sign patterns $\boldsymbol{\theta}=\{\boldsymbol{\theta}_{i,k}\}$.
\ENSURE The fixed point $\mathbf{x}^*$.
\FOR{each $\boldsymbol{\theta}\in\mathcal{P}$}
    \STATE Build $R^x(\boldsymbol{\theta})$ and $R^y(\boldsymbol{\theta})$ as in Appendix~\ref{Proof of Theorem 2}.
    \STATE Set $W_a(\boldsymbol{\theta})=A\bigl(W+R^x(\boldsymbol{\theta})\bigr)$.
    \STATE Set $B_a(\boldsymbol{\theta})=A\bigl(\Omega+R^y(\boldsymbol{\theta})\bigr)$.
    \STATE Compute $\mathbf{x}_{\mathrm{cand}}=(I-W_a(\boldsymbol{\theta}))^{-1}(A_a\mathbf{s}+B_a(\boldsymbol{\theta})\mathbf{y})$.
    \IF{$\sign(H_i\mathbf{x}_{\mathrm{cand}}-T_k\mathbf{y})=\boldsymbol{\theta}_{i,k}$ for all $i\le n$, $k\le m$}
        \STATE return $\mathbf{x}_{\mathrm{cand}}$.
    \ENDIF
\ENDFOR
\end{algorithmic}
\end{algorithm}

The crude worst-case bound $|\mathcal{P}|\le 3^{nmq}$ is often far from tight.
First, agent--source pairs with $\Omega_l(i,k)=\Gamma_l(i,k)=0$ contribute no source
term and can be removed from the sign search altogether. Second,
Theorem~\ref{thm:BEPNWF} provides coordinatewise intervals
$[\hat{\mathbf{x}}_{i,L}^*(l),\hat{\mathbf{x}}_{i,U}^*(l)]$ that can fix many signs in
advance; Corollary~\ref{thm:analytical} is the extreme case in which every sign is
determined and the search collapses to a single pattern. Hence the effective search
dimension is the number of \emph{active ambiguous coordinates}, not $nmq$. In the
real-network scenario studied in Section~\ref{Real Network Simulation}, one source is
connected to only four agents, so before any further pruning this leaves at most
$4\times 4=16$ relevant sign variables.

\subsubsection{Bounded Source-Influence Functions}
\label{subsection: Steady-State Condition Approximation in Presence of Bounded Functions}
Assume there exist nonnegative matrices $\underline{\Phi}_l$ and
$\overline{\Phi}_l$ such that
\begin{equation}
\underline{\Phi}_l \le B_l(\mathbf{x}) \le \overline{\Phi}_l,
\qquad \forall\,\mathbf{x}\in\mathbb{I}^{nq}.
\label{eq:bounded-state-dependent-function}
\end{equation}
Define
\begin{align*}
\underline{\Phi} &\triangleq \blkdiag(\underline{\Phi}_1,\dots,\underline{\Phi}_q),
&
\overline{\Phi} &\triangleq \blkdiag(\overline{\Phi}_1,\dots,\overline{\Phi}_q),\\
\Sigma_{\underline{\Phi}} &\triangleq \blkdiag(\Sigma_{\underline{\Phi}_1},\dots,\Sigma_{\underline{\Phi}_q}),
&
\Sigma_{\overline{\Phi}} &\triangleq \blkdiag(\Sigma_{\overline{\Phi}_1},\dots,\Sigma_{\overline{\Phi}_q}).
\end{align*}

\begin{thm}
Under Theorem~\ref{thm:th1}, the fixed point satisfies\footnote{All inequalities,
maxima, and minima below are componentwise.}
\begin{equation}
\mathbf{x}_L^* \le \mathbf{x}^* \le \mathbf{x}_U^*,
\label{eq:BEPNWF}
\end{equation}
where
\begin{align*}
\mathbf{x}_L^*
&\triangleq
\max\!\left\{Z\bigl((I-A)\mathbf{s}+A\underline{\Phi}\,\mathbf{y}
-A\Sigma_{\overline{\Phi}}\mathbf{s}\bigr),\,\mathbf{0}\right\},\\
\mathbf{x}_U^*
&\triangleq
\min\!\left\{Z\bigl((I-A)\mathbf{s}+A\overline{\Phi}\,\mathbf{y}
-A\Sigma_{\underline{\Phi}}\mathbf{s}\bigr),\,\mathbf{1}\right\}.
\end{align*}
\label{thm:BEPNWF}
\end{thm}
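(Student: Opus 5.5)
The plan is to start from the representation of Corollary~\ref{thm:cor2},
\begin{align*}
\mathbf{x}^* = Z\bigl((I-A)\mathbf{s} + AB(\mathbf{x}^*)\mathbf{y} - A\Sigma_{B(\mathbf{x}^*)}\mathbf{s}\bigr),
\end{align*}
and bound the argument of $Z$ componentwise. The first step is to show that $Z=(I-AW)^{-1}$ is entrywise nonnegative. In the proof of Corollary~\ref{thm:cor2} we have $\|AW\|_\infty<1$, so the Neumann series gives $Z=\sum_{t\ge 0}(AW)^t$. Since $A$ is a nonnegative diagonal matrix and $W$ has nonnegative blocks $\Lambda_{l,j}W_j$, every term of the series is nonnegative, and hence $Z\ge 0$. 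Consequently $Z$ is monotone: $\mathbf{u}\le\mathbf{v}$ implies $Z\mathbf{u}\le Z\mathbf{v}$.

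The second step bounds the inner vector using \eqref{eq:bounded-state-dependent-function}. Because $\mathbf{y}\ge 0$ and $A\ge 0$, we get $A\underline{\Phi}\mathbf{y}\le AB(\mathbf{x}^*)\mathbf{y}\le A\overline{\Phi}\mathbf{y}$. The row-sum operator $\Sigma_{(\cdot)}$ is monotone for nonnegative matrices, so $\Sigma_{\underline{\Phi}}\le\Sigma_{B(\mathbf{x}^*)}\le\Sigma_{\overline{\Phi}}$ as diagonal matrices. Together with $\mathbf{s}\ge 0$, this gives
\begin{align*}
-A\Sigma_{\overline{\Phi}}\mathbf{s}\le -A\Sigma_{B(\mathbf{x}^*)}\mathbf{s}\le -A\Sigma_{\underline{\Phi}}\mathbf{s}.
\end{align*}
Adding the common term $(I-A)\mathbf{s}$ produces a two-sided componentwise bound on the bracket. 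Applying the monotone map $Z$ then yields
\begin{align*}
Z\bigl((I-A)\mathbf{s}+A\underline{\Phi}\mathbf{y}-A\Sigma_{\overline{\Phi}}\mathbf{s}\bigr)\le\mathbf{x}^*\le Z\bigl((I-A)\mathbf{s}+A\overline{\Phi}\mathbf{y}-A\Sigma_{\underline{\Phi}}\mathbf{s}\bigr).
\end{align*}

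The third step handles the clipping. Theorem~\ref{thm:th1} places $\mathbf{x}^*\in\mathbb{I}^{nq}$, since $f$ maps $\mathbb{I}^{nq}$ into itself and the fixed point lies there. Hence $\mathbf{0}\le\mathbf{x}^*\le\mathbf{1}$. Combining this with the previous bounds through componentwise max and min gives $\mathbf{x}_L^*\le\mathbf{x}^*\le\mathbf{x}_U^*$.

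The only point needing care is the nonnegativity of $Z$, which is what makes the sandwiching legitimate. It follows from the Neumann series under $\|AW\|_\infty<1$, as already established in Corollary~\ref{thm:cor2}. The remaining sign bookkeeping is straightforward: the $\underline{\Phi}$ and $\overline{\Phi}$ bounds swap roles between the $\mathbf{y}$ term and the $\Sigma\mathbf{s}$ term because the latter enters with a minus sign.
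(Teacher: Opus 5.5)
Your proposal is correct and follows the paper's proof step for step. You start from the representation of $\mathbf{x}^*$ in Corollary~\ref{thm:cor2}, sandwich $B(\mathbf{x}^*)$ and $\Sigma_{B(\mathbf{x}^*)}$ between the $\underline{\Phi}$ and $\overline{\Phi}$ bounds, use $Z=\sum_{r\ge 0}(AW)^r\ge 0$ to preserve the ordering, and clip using $\mathbf{x}^*\in\mathbb{I}^{nq}$. You are slightly more explicit than the paper about needing $A,\mathbf{y},\mathbf{s}\ge 0$ to turn the matrix bounds into vector bounds.
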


\begin{proof}
From \eqref{eq:EPPNA},
\begin{align*}
\mathbf{x}^* = Z\Bigl((I-A)\mathbf{s}+AB(\mathbf{x}^*)\mathbf{y}
-A\Sigma_{B(\mathbf{x}^*)}\mathbf{s}\Bigr).
\end{align*}
By \eqref{eq:bounded-state-dependent-function},
\begin{align*}
\underline{\Phi} \le B(\mathbf{x}^*) \le \overline{\Phi},
\qquad
\Sigma_{\underline{\Phi}} \le \Sigma_{B(\mathbf{x}^*)} \le \Sigma_{\overline{\Phi}}.
\end{align*}
Also,
\begin{align*}
Z=(I-AW)^{-1}=\sum_{r=0}^{\infty}(AW)^r
\end{align*}
is nonnegative because $AW\ge 0$ and $\rho(AW)<1$. Therefore multiplication by $Z$
preserves elementwise inequalities. Clipping to $[\mathbf{0},\mathbf{1}]$ preserves the
bounds because $\mathbf{x}^*\in\mathbb{I}^{nq}$.
\end{proof}

\begin{ex}
In Example~\ref{ex: Example 2}, suppose we replace $b_{l,2,1}(\mathbf{x})$ by
\begin{align*}
{b}_{1,2,1}(\mathbf{x}) &= 0.55\ln\!\bigl(2-\beta_1(2,1)\bigr), \\
{b}_{2,2,1}(\mathbf{x}) &= 0.25 - 0.25 \sin\!\bigl(\beta_2(2,1)\bigr).
\end{align*}
Since $\beta_1(2,1),\beta_2(2,1)\in[0,1]$, one admissible choice of bounding matrices
is
\begin{align*}
\underline{\Phi}_1 &= \mathbf{0}_3,
&
\underline{\Phi}_2 &= \begin{bmatrix} 0 & 0.0396 & 0 \end{bmatrix}^{\top}, \\
\overline{\Phi}_1 &= \begin{bmatrix} 0 & 0.3812 & 0 \end{bmatrix}^{\top},
&
\overline{\Phi}_2 &= \begin{bmatrix} 0 & 0.2500 & 0 \end{bmatrix}^{\top}.
\end{align*}
Figure~\ref{fig:Example4} compares the fixed point with the bounds from
Theorem~\ref{thm:BEPNWF}.
\label{ex:Nonlinear_Weight_Function_Example}
\end{ex}

\begin{figure}[t]
\centering
\includegraphics[scale=0.28]{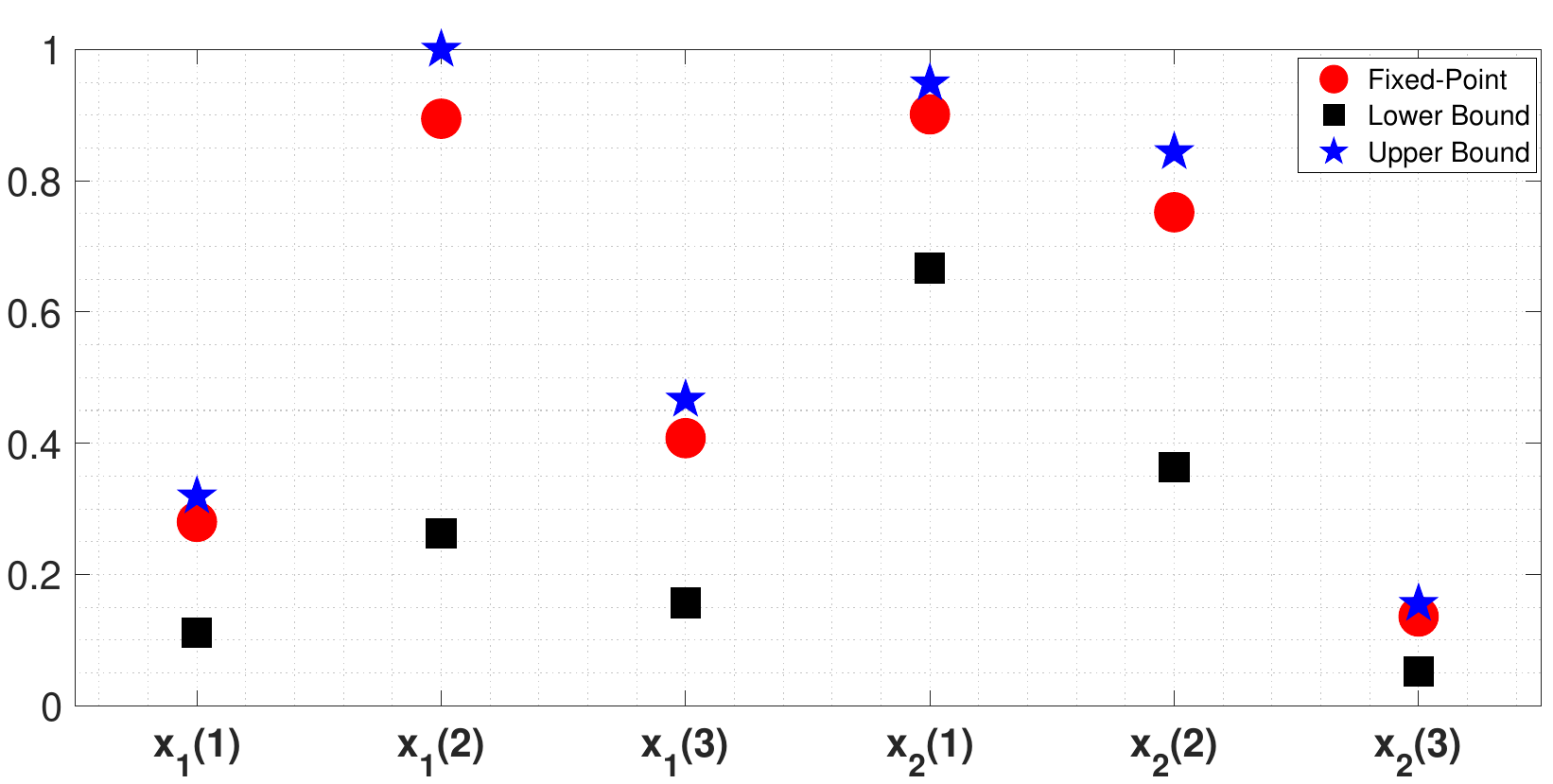}
\caption{Example~\ref{ex:Nonlinear_Weight_Function_Example}:fixed point and componentwise bounds.}
\label{fig:Example4}
\end{figure}

\subsubsection{Special Case with a Closed-Form Solution}
\label{Closed-form Solutions in Presence of Affine Weight Functions}
In general, the affine rule requires a finite sign search. In a special regime,
however, the sign pattern is fixed a priori and the fixed point admits a closed-form
expression.

\begin{cor}
Assume \eqref{eq: Affine function}. Let
$\hat{\mathbf{x}}^*_{i,U}\triangleq H_i\mathbf{x}_U^*$ and
$\hat{\mathbf{x}}^*_{i,L}\triangleq H_i\mathbf{x}_L^*$. If, for every $i\le n$,
$k\le m$, and $l\le q$,
\begin{equation}
\hat{\mathbf{y}}_k(l) > \hat{\mathbf{x}}^*_{i,U}(l)
\quad \text{or} \quad
\hat{\mathbf{y}}_k(l) < \hat{\mathbf{x}}^*_{i,L}(l),
\label{eq:cor2}
\end{equation}
then the sign pattern is unique. Let $\boldsymbol{\theta}$ denote that unique pattern,
and define
\begin{align*}
A_a &\triangleq I-A-A\Sigma_{\Omega},\\
W_a &\triangleq A\bigl(W+R^x(\boldsymbol{\theta})\bigr),\\
B_a &\triangleq A\bigl(\Omega+R^y(\boldsymbol{\theta})\bigr),
\end{align*}
where $R^x(\boldsymbol{\theta})$ and $R^y(\boldsymbol{\theta})$ are defined as in
Appendix~\ref{Proof of Theorem 2}. Then the fixed point has the closed form
\begin{align*}
\mathbf{x}^* = (I-W_a)^{-1}(A_a\mathbf{s}+B_a\mathbf{y}).
\end{align*}
\label{thm:analytical}
\end{cor}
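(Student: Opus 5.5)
The plan is to combine the a priori bounds of Theorem~\ref{thm:BEPNWF} with the sign-pattern decomposition behind Theorem~\ref{thm:th2}.

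\emph{Step 1: the bounds apply.} Under \eqref{eq: Affine function} and \eqref{eq:AffineSufficient1}, every $b_{l,i,k}(\mathbf{x})$ lies in $[\Omega_l(i,k)-\Gamma_l(i,k),\,\Omega_l(i,k)]$, because $\beta_l(i,k)\in[0,1]$. So \eqref{eq:bounded-state-dependent-function} holds with $\underline{\Phi}_l=\Omega_l-\Gamma_l$ and $\overline{\Phi}_l=\Omega_l$. Under the contraction hypothesis, via Theorem~\ref{thm:th1} or equivalently \eqref{eq:kappa-aff}, a unique fixed point $\mathbf{x}^*$ exists, and Theorem~\ref{thm:BEPNWF} gives $\mathbf{x}_L^*\le\mathbf{x}^*\le\mathbf{x}_U^*$. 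Applying the selection matrix $H_i$, which is nonnegative, yields $\hat{\mathbf{x}}^*_{i,L}(l)\le \hat{\mathbf{x}}^*_i(l)\le \hat{\mathbf{x}}^*_{i,U}(l)$ for every $i$ and $l$.

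\emph{Step 2: the sign pattern is determined.} Fix $i,k,l$. If $\hat{\mathbf{y}}_k(l)>\hat{\mathbf{x}}^*_{i,U}(l)$, then $\hat{\mathbf{x}}^*_i(l)-\hat{\mathbf{y}}_k(l)<0$. If $\hat{\mathbf{y}}_k(l)<\hat{\mathbf{x}}^*_{i,L}(l)$, then the difference is positive. By \eqref{eq:cor2}, every entry of $\sign(H_i\mathbf{x}^*-T_k\mathbf{y})$ is therefore a nonzero value known in advance. This defines a single pattern $\boldsymbol{\theta}$, and zero entries never occur. The true fixed point must carry this pattern, so it is the only pattern that can pass the consistency check of Algorithm~\ref{algo: Fixed Point Affine Functions}; all other candidates are excluded a priori.

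\emph{Step 3: closed form.} With $\boldsymbol{\theta}$ fixed, $|H_i\mathbf{x}-T_k\mathbf{y}|=\diag(\boldsymbol{\theta}_{i,k})(H_i\mathbf{x}-T_k\mathbf{y})$ is affine in $\mathbf{x}$. Substituting this into \eqref{eq: Affine function} and \eqref{eq:CBUR} gives exactly the affine map $\mathbf{x}\mapsto A_a\mathbf{s}+B_a\mathbf{y}+W_a\mathbf{x}$ of Theorem~\ref{thm:th2}, with $R^x,R^y$ as constructed there. The fixed point $\mathbf{x}^*$ satisfies this affine equation. Theorem~\ref{thm:th2} gives $\|W_a\|_\infty\le\kappa_{\mathrm{aff}}<1$, so $I-W_a$ is invertible and $\mathbf{x}^*=(I-W_a)^{-1}(A_a\mathbf{s}+B_a\mathbf{y})$.

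The main obstacle is not the algebra, which is inherited from Theorem~\ref{thm:th2}. It is justifying that the hypotheses of Theorems~\ref{thm:th1}/\ref{thm:th2} hold, in particular the contraction condition and \eqref{eq:AffineSufficient1}. These are implicit in the phrase ``Assume \eqref{eq: Affine function}'' and should be stated as standing assumptions. One also has to check that the strict inequalities in \eqref{eq:cor2} exclude $\theta=0$, so that the pattern is genuinely unique and $R^x,R^y$ are evaluated at a well-defined $\boldsymbol{\theta}$.
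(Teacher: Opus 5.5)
Your proof is correct and follows the paper's argument. The bounds of Theorem~\ref{thm:BEPNWF}, together with the strict inequalities in \eqref{eq:cor2}, fix a nonzero sign pattern for $\mathbf{x}^*$, and the affine reduction of Theorem~\ref{thm:th2} with $\|W_a\|_\infty\le\kappa_{\mathrm{aff}}<1$ then gives the closed form. You are more explicit than the paper, which only says that Algorithm~\ref{algo: Fixed Point Affine Functions} collapses to a single candidate: you exhibit admissible bounds $\underline{\Phi}_l=\Omega_l-\Gamma_l$ and $\overline{\Phi}_l=\Omega_l$, you check directly that $\mathbf{x}^*$ solves the $\boldsymbol{\theta}$-affine equation, and you correctly flag \eqref{eq:AffineSufficient1} and \eqref{eq:kappa-aff} as tacit hypotheses of the statement.
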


\begin{proof}
Theorem~\ref{thm:BEPNWF} yields
\begin{align*}
\hat{\mathbf{x}}^*_{i,L}(l) \le \hat{\mathbf{x}}_i^*(l) \le \hat{\mathbf{x}}^*_{i,U}(l).
\end{align*}
Condition~\eqref{eq:cor2} fixes the sign of every coordinate of
$\hat{\mathbf{x}}_i^*-\hat{\mathbf{y}}_k$. Hence the candidate set has size one, and
Algorithm~\ref{algo: Fixed Point Affine Functions} reduces to a single candidate
solution.
\end{proof}

\begin{ex}
In Example~\ref{ex: Example 2}, take the affine rule with $\gamma=0.2$. Then
\begin{align*}
\Gamma_1=\Gamma_2
&=
\begin{bmatrix}
0 & 0.1 & 0
\end{bmatrix}^{\top},
&
\Omega_1=\Omega_2
&=
\begin{bmatrix}
0 & 0.25 & 0
\end{bmatrix}^{\top}.
\end{align*}
Hence one may take
\begin{align*}
\underline{\Phi}_1=\underline{\Phi}_2
&=
\begin{bmatrix}
0 & 0.15 & 0
\end{bmatrix}^{\top},
&
\overline{\Phi}_1=\overline{\Phi}_2
&=
\begin{bmatrix}
0 & 0.25 & 0
\end{bmatrix}^{\top}.
\end{align*}
Theorem~\ref{thm:BEPNWF} yields
\begin{align*}
\mathbf{x}_L^*
&=
\begin{bmatrix}
0.2024 & 0.5667 & 0.1889 & 0.7439 & 0.5500 & 0.1889
\end{bmatrix}^{\top},\\
\mathbf{x}_U^*
&=
\begin{bmatrix}
0.3311 & 0.9667 & 0.3222 & 0.8848 & 0.8500 & 0.3222
\end{bmatrix}^{\top}.
\end{align*}
Since
\begin{align*}
\hat{\mathbf{y}}_1=
\begin{bmatrix}
1 & 1
\end{bmatrix}^{\top}
\end{align*}
and every entry of $\mathbf{x}_U^*$ is strictly smaller than $1$,
condition~\eqref{eq:cor2} holds. Therefore Corollary~\ref{thm:analytical} applies
and
\begin{align*}
\mathbf{x}^*
&=
\begin{bmatrix}
0.2783 & 0.8032 & 0.2677 & 0.8267 & 0.7260 & 0.2677
\end{bmatrix}^{\top}.
\end{align*}
\label{ex:Example4}
\end{ex}

\section{Network Simulation}

\label{Real Network Simulation}

We use the Teenage Friends and Lifestyle Study \cite{michell2000smoke} (also see similar works \cite{block2014multidimensional,steglich2010dynamic,snijders2017stochastic}). The data contain three waves of behavior and friendship links. We keep a connected subnetwork of $41$ women. The construction is meant to be illustrative, i.e., we do not claim a calibrated structural estimate.
    \begin{figure}[http]
\centering
\includegraphics[scale=0.35]{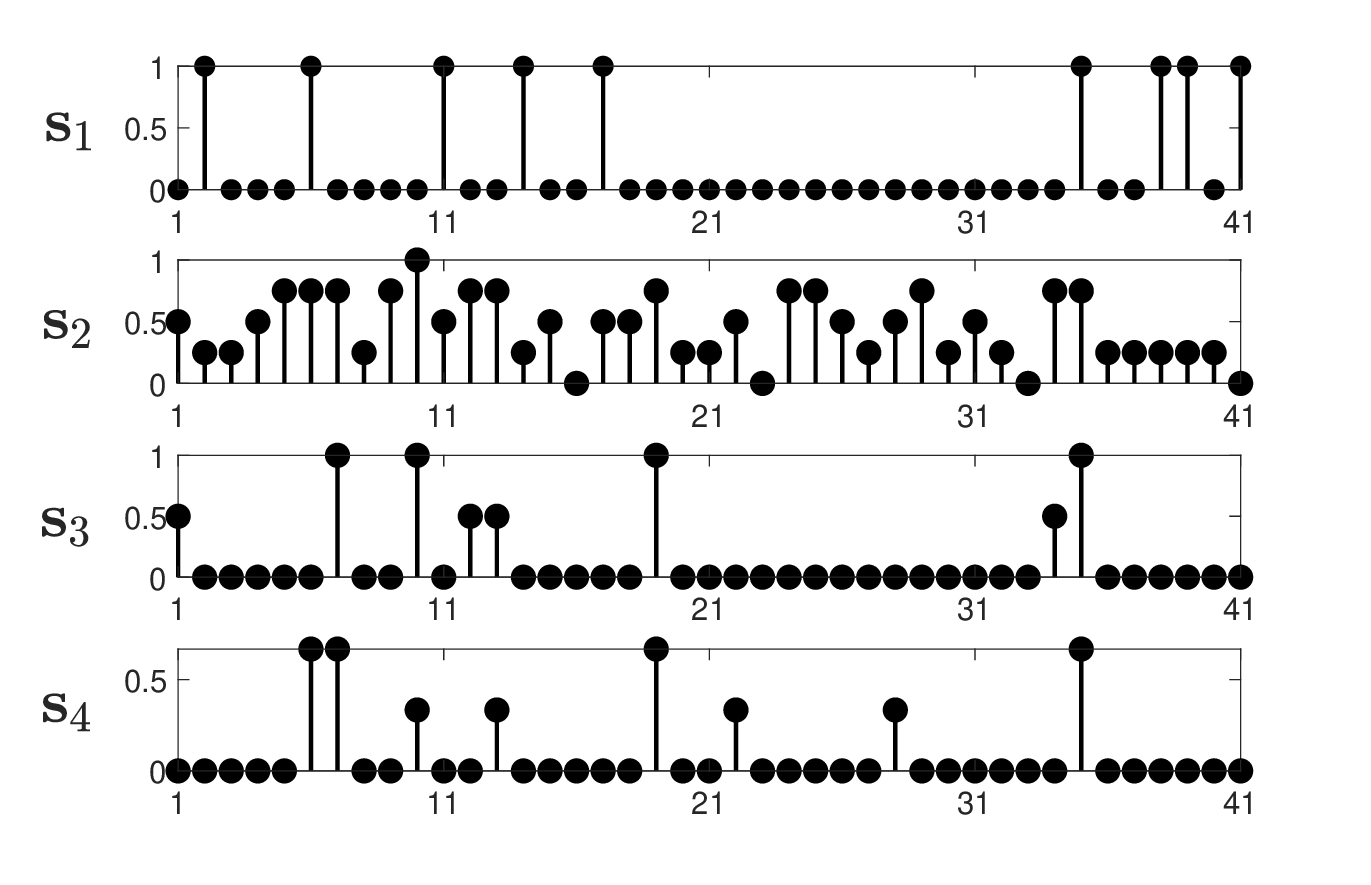}
\caption{Innate opinions in the real network.}
\label{fig: Real Network innate opinion}
\end{figure}
The four behavioral coordinates (opinions) are sport, alcohol, smoking, and cannabis. We rescale each one to $[0,1]$. For sport, $0$ means regular and $1$ means irregular. For the other three coordinates, $0$ means no use and $1$ means regular use. We denote the normalized coordinates by $\mathbf{x}_1,\dots,\mathbf{x}_4$. The normalized data from the first wave is used as the innate opinion vector $\mathbf{s}_l$, depicted in Figure \ref{fig: Real Network innate opinion}.

  Figure \ref{fig: Yearly comparison of friendship networks} depicts the friendship networks. The weighted adjacency matrices $W_1$ through $W_4$ are assumed to be identical and are obtained from a row-normalized adjacency matrix derived from Figure \ref{fig: Yearly comparison of friendship networks}.

 To determine $\boldsymbol{\alpha}_l(i)$ for the agent $\mathrm{v}_i$, we consider its network links and the shifts in its opinion throughout three years of data collection. In the given dataset, denote $\hat{\mathbf{x}}_i^p$ as the normalized opinion vector of $\mathrm{v}_i$ during the $p$-th data collection round, where $p \in \left\{1, 2, 3\right\}$, and let ${\zeta}_i$ denote the number of edges that enter the vertex $\mathrm{v}_i$. We define the empirical three-wave average $\hat{\mathbf{x}}^{\mathrm{emp}}_i \triangleq \frac{1}{3} \sum_{p = 1}^{3}{\hat{\mathbf{x}}_i^p}$,
and use the heuristic rule
    \begin{align*}
    \boldsymbol{\alpha}_l(i) &= 1-\frac{1}{1+{\frac{1}{3}}{\zeta}_i{\sqrt{{\sum_{p = 1}^{3}{\left(\hat{\mathbf{x}}_i^p(l)-\hat{\mathbf{x}}^{\mathrm{emp}}_i(l)\right)}^2}}}}.
    \end{align*}
To determine  ${\Lambda}_{l,j}$, we first compute the empirical covariance
    \begin{align*}
      \hat{\overline{\mathbf{x}}}^p = \frac{1}{n} \sum_{i = 1}^{n}{\hat{\mathbf{x}}_i^p}, \quad 
    \overline{C} = \frac{1}{3n} \sum_{p = 1}^{3} \sum_{i = 1}^{n}{(\hat{\mathbf{x}}_i^p - \hat{\overline{\mathbf{x}}}^p)(\hat{\mathbf{x}}_i^p - \hat{\overline{\mathbf{x}}}^p)^{\top}}.
    \end{align*}
    Because the model requires nonnegative cross-layer weights, we project $\overline{C}$ onto the nonnegative orthant:
\begin{align*}
\overline{C}^{+}(l,j) = \max\!\left\{\overline{C}(l,j),0\right\}.
\end{align*}
Assuming each row sum of $\overline{C}^{+}$ is positive, define the row-normalized matrix
\begin{align*}
C = \Sigma_{\overline{C}^{+}}^{-1}\overline{C}^{+},
\qquad
\Lambda_{l,j} = C(l,j)I.
\end{align*}
Then the cross-layer coefficients used in the simulation are nonnegative and satisfy $\sum_{j=1}^q \Lambda_{l,j} = I$
for every $l$. This guarantees that the cross-layer coefficients are nonnegative and satisfy
$\sum_{j=1}^q \Lambda_{l,j}=I$ for every $l$.

      $B_l\left( {{\mathbf{x}}} \right)$  is modeled by the affine function in \eqref{eq: Affine function}. Since this experiment uses a single information source, we restrict the admissible source strategy to the continuous box $
    \mathcal{Y} \triangleq [0,1]^4,
$
    which matches the normalized scale of the four behavioral variables, and we set
    \begin{align*}
        c_{l,i,1} &= C(l,:)^{\top}, \\
        {\Omega_l(i,1)} &=
        \begin{cases}
        0.9\,{n^{\omega}_i}\,\dfrac{1-{\boldsymbol{\alpha}_l(i)}}{\boldsymbol{\alpha}_l(i)},
        &
        \begin{aligned}[t]
        &\text{if } {\boldsymbol{\alpha}_l(i)} \neq 0 \\
        &\text{and } (\mathrm{u}_1,l,\mathrm{v}_i,l) \in \mathcal{E},
        \end{aligned}
        \\
        0, & \text{otherwise},
        \end{cases} \\
        {\Gamma_l(i,1)} &= {n^{\gamma}_i}{\Omega_l(i,1)}.
    \end{align*}
    Here $n_i^\omega\in\{0,1\}$ indicates whether agent $i$ is connected to the source, and $n_i^\gamma\in\{0,1\}$ indicates whether that connection is filtered by confirmation bias. Thus $(n_i^\omega,n_i^\gamma)=(0,0)$ means no source access, $(1,0)$ means source access without confirmation bias, and $(1,1)$ means source access with confirmation bias. Because $c_{l,i,1}$ is nonnegative and sums to one, every $\hat{\mathbf{y}}_1\in\mathcal{Y}$ yields $\beta_l(i,1)\in[0,1]$. Also,
\begin{align*}
0\le \Gamma_l(i,1)\le \Omega_l(i,1)\le 0.9\,\frac{1-\alpha_l(i)}{\alpha_l(i)}
\end{align*}
whenever $\alpha_l(i)\neq 0$.Hence \eqref{eq:AffineSufficient2} holds uniformly over $\mathcal{Y}$.

\begin{figure}[http]
\centering
\includegraphics[scale=0.4]{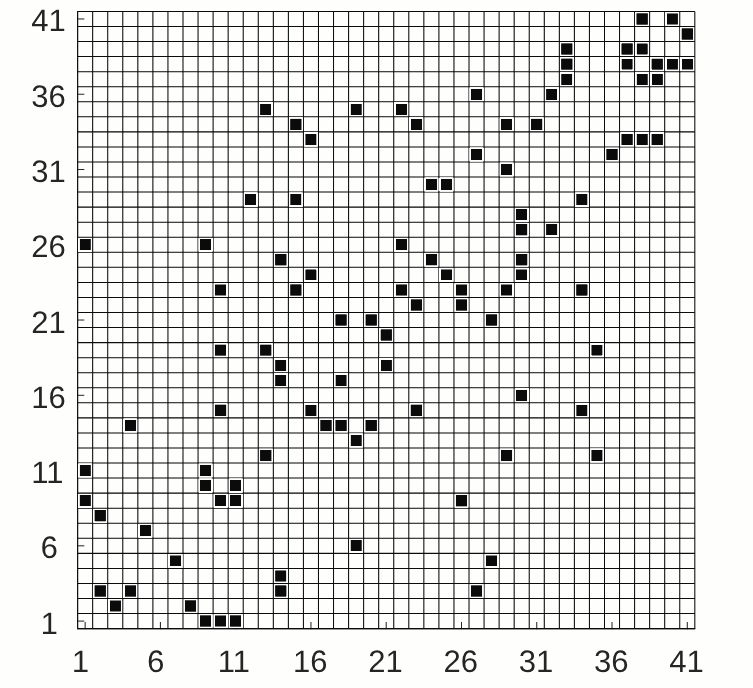}
\caption{Row-normalized friendship graph used in the simulation: $41\times 41$ matrix represents each agent in the rows and columns; a black cell at $(i,j)$ means agents $i$ and $j$ are connected.}
\label{fig: Yearly comparison of friendship networks}
\end{figure}
\subsection{Simulation}

Let $\mathbf{x}^*(\hat{\mathbf{y}}_1)$ denote the unique fixed point of \eqref{eq:CBUR} associated with an admissible source strategy $\hat{\mathbf{y}}_1\in\mathcal{Y}$, and let
\begin{align*}
\hat{\mathbf{x}}_i^*(\hat{\mathbf{y}}_1) \triangleq H_i \mathbf{x}^*(\hat{\mathbf{y}}_1), \, \,  
\hat{\overline{\mathbf{x}}}^{*}(\hat{\mathbf{y}}_1) \triangleq\frac{1}{n}\sum_{i=1}^{n}\hat{\mathbf{x}}_i^*(\hat{\mathbf{y}}_1),\,\, 
\hat{\overline{\mathbf{s}}} \triangleq \frac{1}{n}\sum_{i=1}^{n}\hat{\mathbf{s}}_i.
\end{align*}
The information source seeks a harm-reduction strategy that minimizes steady-state cannabis use, namely
\begin{align*}
J(\hat{\mathbf{y}}_1)\triangleq \hat{\overline{\mathbf{x}}}^{*}(\hat{\mathbf{y}}_1)(4),
\qquad \hat{\mathbf{y}}_1\in\mathcal{Y}.
\end{align*}
To summarize network-level changes, we use the percentage relative steady-state opinion change
\begin{align*}
\Delta_{\%}\hat{\overline{\mathbf{x}}}^{*}(l)
\triangleq
100\,\frac{\hat{\overline{\mathbf{x}}}^{*}(\hat{\mathbf{y}}_1)(l)-\hat{\overline{\mathbf{s}}}(l)}
{\hat{\overline{\mathbf{s}}}(l)}.
\end{align*}

Each objective evaluation is therefore a nested fixed-point computation: for a given $\hat{\mathbf{y}}_1\in\mathcal{Y}$, we iterate \eqref{eq:CBUR} until $\|\mathbf{x}[t+1]-\mathbf{x}[t]\|_\infty<10^{-8}$ and then compute $J(\hat{\mathbf{y}}_1)$ from the resulting steady state\footnote{Here, we do not employ Algorithm~\ref{algo: Fixed Point Affine Functions} although, in general, it provides an exact characterization and
solver for the affine fixed point but we instead use fixed-point iterations because the
steady state must be recomputed repeatedly inside an outer optimization loop.}. Because the admissible set $\mathcal{Y}$ was chosen to satisfy the uniform conditions discussed above, every candidate $\hat{\mathbf{y}}_1$ remains inside the invariant and contractive regime covered by Theorem \ref{thm:th1}.

We evaluate four scenarios: i) no information source, i.e., $n_i^{\omega}=n_i^{\gamma}=0$ for all $i$; ii) a universally accessible information source without confirmation bias, i.e., $n_i^{\omega}=1$ and $n_i^{\gamma}=0$ for all $i$; iii) a universally accessible information source with confirmation bias, i.e., $n_i^{\omega}=1$ and $n_i^{\gamma}=1$ for all $i$; and iv) source design over $\mathcal{Y}$ with confirmation bias and source access restricted to a selected subset of agents. In Scenario 4, source access is restricted to a four-agent subset. We report the numerically best subset returned by the search routine together with its optimized source vector; this should be interpreted as an illustrative computation rather than a global optimum.

Outcomes are reported in Table \ref{table:2}. We remark again that because the parameterization is heuristic, the exact percentages should be interpreted illustratively rather than a concrete scientific finding. The  results are outlined as follows:
 \begin{itemize}
  \item Without any information source, the average cannabis usage would rise by around $14\%$ from its original average.
  \item When the information source is universally available and people accept the information without confirmation bias, the optimal information vector ${\hat{\mathbf{y}}}^*_1$ is zero. This leads to a reduction in average cannabis usage by approximately $40\%$.
  \item Under the assumption that the information source is universally accessible and people accept the information with confirmation bias, it is recommended to use alcohol once or twice a year as a harm reduction strategy. This would lead to a decrease in cannabis usage by around $24\%$.
  \item With localized access to the information source and confirmation bias, cannabis usage would decrease by about $9\%$ relative to the initial average if the information source recommends avoiding cannabis but allows occasional smoking and alcohol consumption once a week. The reported 4-agent source-access subset is $\{\mathrm{v}_6,\mathrm{v}_7,\mathrm{v}_{13},\mathrm{v}_{28}\}$.
\end{itemize}

\begin{rem}
The continuous box-constrained optimization over $\mathcal{Y}$ in Scenarios 3 and 4 is implemented with MATLAB's Optimization Toolbox \cite{OptimizationToolbox}, with each candidate objective value obtained from the converged fixed point of \eqref{eq:CBUR}. 
\end{rem}
 
\begin{table}[h!]
\caption{Steady-state cannabis outcomes under four source-design scenarios.}
\centering
\footnotesize
\setlength{\tabcolsep}{4pt}
\begin{tabular}{|c|c|c|c|c|}
\hline
 scenario& 1 & 2 & 3 & 4 \\ 
\hline
${\hat{\mathbf{y}}}_1^*$
& -
& $\mathbf{0}$
& $\left[\begin{smallmatrix} 0.0 \\ 0.26\\ 0.0\\ 0.0 \end{smallmatrix}\right]$
& $\left[\begin{smallmatrix} 0.09 \\ 0.64\\ 0.57\\ 0.0 \end{smallmatrix}\right]$ \\ 
\hline
$\Delta_{\%}\hat{\overline{\mathbf{x}}}^{*}(4)$ 
& $13.5\%$
& $-39.9\%$
& $-23.63\%$
& $-8.9\%$ \\
\hline
$J(\hat{\mathbf{y}}_1)$ 
& $0.111$
& $0.059$
& $0.075$
& $0.089$ \\
\hline
\end{tabular}
\label{table:2}
\end{table}

These findings suggest that design conclusions can be sensitive to whether
confirmation bias is included in the model. Under the present illustrative
parameterization, the optimizer selects different source vectors in the no-bias and
confirmation-bias settings. We therefore interpret the results qualitatively, as evidence that confirmation bias can change the direction of the recommended intervention. Also note that the best strategy identified under no-bias conditions advocates for avoiding all risky behaviors to minimize high-risk actions. Nonetheless, this advice does not conform to \textit{Principles of Harm Reduction}, which argue that fully prohibitive strategies may not be as effective as those promoting moderation or substitution \cite{marlatt2011harm}. On the other hand, strategies using models that include  bias are more pragmatic. For instance, scenario 4 demonstrates that better outcomes are observed when the recommendations involve avoiding cannabis but allow occasional smoking and weekly alcohol consumption. Another illustrative result is that pragmatic strategies depend on the accessibility of information sources. For example, in a universally accessible scenario, the best strategy may prohibit risky behaviors more effectively compared to localized information sources.

\section{Conclusion}
\label{Conclusion}

We studied a topic-layered multidimensional opinion-dynamics model with confirmation bias. The model combines FJ-type peer interaction with state-dependent source influence, where source weights depend on agent--source opinion mismatch. It can be viewed as a restricted node-aligned multiplex system in which the same agents and sources appear in each topic layer, intra-layer social interaction is encoded by $W_l$, and inter-layer coupling is encoded by $\Lambda_{l,j}$.

For general Lipschitz source-influence functions, we established sufficient conditions for contraction and convergence to a unique fixed point. For affine confirmation-bias functions, we showed that the fixed point can be computed through a finite sign-consistency search and, in a special regime, in closed form. For bounded nonlinear source-influence functions, we derived explicit lower and upper bounds on the fixed point.

The examples and the real-network study show that significance of multidimensional coupling  and that source-design conclusions can change qualitatively when confirmation bias is ignored. Because the network parameterization is heuristic, these conclusions should be interpreted qualitatively.

It is important to note the limitations of the multilayer modeling in the present paper: it uses a node-aligned cross-topic coupling, i.e.,  each $\Lambda_{l,j}$ is diagonal, so agent $i$ in layer $l$ reweights only its own layer-$j$ social aggregate $(W_j\mathbf{x}_j)(i)$. A more general multilayer model, in the sense of \cite{kivela2014multilayer,boccaletti2014structure}, would replace $\Lambda_{l,j}$ by full nonnegative matrices $M_{l,j}\in\mathbb{R}_+^{n\times n}$. Under nonnegativity, the current normalization $\sum_{j=1}^q \Lambda_{l,j}=I$ forces all off-diagonal entries to vanish, and therefore cannot represent genuine interlayer edges between distinct agent-layer nodes. A true interlayer generalization would instead require a weaker condition such as
\begin{align*}
\sum_{j=1}^q M_{l,j}\mathbf{1}_n=\mathbf{1}_n,
\qquad l\in[q],
\end{align*}
or, equivalently, given row-stochastic $W_j$, that the supra-social matrix
\begin{align*}
\mathcal{W}\triangleq [M_{l,j}W_j]_{l,j=1}^q
\end{align*}
be row-stochastic. In that setting, topic-$j$ social evidence formed around agent $r$ could affect the update of topic $l$ for a different agent $i$, allowing cross-agent cross-topic spillovers, coordinated exposure, and latent group-level coupling that are excluded by the present diagonal restriction. While the contraction analysis is likely to extend by replacing $\|AW\|_\infty$ with $\|A\mathcal{W}\|_\infty$, however the behavioral interpretation, identifiability, and decoupling structure would require a separate analysis. It would also be interesting to combine such fully multilayer coupling with time-varying multiplex interaction patterns as in \cite{PROSKURNIKOV201711896, shiu2026coordination, abedinzadeh2025tvfj}. These  as well as other natural extensions such as hierarchical leader-mediated communication \cite{wang2026molena}, and strategic source design \cite{mao2021competitive} are left as future work. 
\appendices

\appendices

\section{Proof of Theorem~\ref{thm:th1}}
\label{Proof of Theorem 1}
Let $f$ denote the map in \eqref{eq:CBUR}. We first show that
$f(\mathbb{I}^{nq})\subseteq \mathbb{I}^{nq}$. For $l\in[q]$ and $i\in[n]$,
\eqref{eq:ABUR} gives
\begin{align*}
f(\mathbf{x})((l-1)n+i)
&=
\Bigl(1-\alpha_l(i)\bigl(1+\sum_{k=1}^m b_{l,i,k}(\mathbf{x})\bigr)\Bigr)
\mathbf{s}_l(i) \\
&\quad+
\alpha_l(i)\sum_{j=1}^q \Lambda_{l,j}(i,i)
\sum_{a=1}^n W_j(i,a)\,\mathbf{x}_j(a) \\
&\quad+
\alpha_l(i)\sum_{k=1}^m b_{l,i,k}(\mathbf{x})\,\mathbf{y}_l(k).
\end{align*}
By Assumption~\ref{asm:Positivity}, all source weights are nonnegative. By
Assumption~\ref{asm:Wrow}, each $W_j$ is row-stochastic, and since
$\sum_{j=1}^q\Lambda_{l,j}=I$, we have
\begin{align*}
\sum_{j=1}^q \Lambda_{l,j}(i,i)\sum_{a=1}^n W_j(i,a)=1.
\end{align*}
By Assumption~\ref{asm: ConvexCondition2},
\begin{align*}
\alpha_l(i)\Bigl(1+\sum_{k=1}^m b_{l,i,k}(\mathbf{x})\Bigr)\le 1,
\end{align*}
so the coefficient of $\mathbf{s}_l(i)$ is nonnegative. Therefore each coordinate
of $f(\mathbf{x})$ is a convex combination of values in $[0,1]$. Hence
$f(\mathbf{x})\in\mathbb{I}^{nq}$.

We next prove contraction. Fix $\mathbf{x},\mathbf{z}\in\mathbb{I}^{nq}$.
For each $l\in[q]$ and $i\in[n]$,
\begin{align*}
&\bigl|(f(\mathbf{x})-f(\mathbf{z}))((l-1)n+i)\bigr| \\
&\le
\alpha_l(i)
\left|
\sum_{j=1}^q \Lambda_{l,j}(i,i)
\sum_{a=1}^n W_j(i,a)
\bigl(\mathbf{x}_j(a)-\mathbf{z}_j(a)\bigr)
\right| \\
&\quad+
\alpha_l(i)\sum_{k=1}^m
\bigl|b_{l,i,k}(\mathbf{x})-b_{l,i,k}(\mathbf{z})\bigr|
\,\bigl|\mathbf{y}_l(k)-\mathbf{s}_l(i)\bigr|.
\end{align*}
Using row-stochasticity of $W_j$, the identity
$\sum_{j=1}^q\Lambda_{l,j}(i,i)=1$, and Assumption~\ref{asm: LC}, we obtain
\begin{align*}
&\bigl|(f(\mathbf{x})-f(\mathbf{z}))((l-1)n+i)\bigr| \\
&\le
\alpha_l(i)
\sum_{j=1}^q \Lambda_{l,j}(i,i)
\sum_{a=1}^n W_j(i,a)
\|\mathbf{x}-\mathbf{z}\|_\infty \\
&\quad+
\alpha_l(i)\sum_{k=1}^m \mu_{l,i,k}
\,\bigl|\mathbf{y}_l(k)-\mathbf{s}_l(i)\bigr|
\|\mathbf{x}-\mathbf{z}\|_\infty \\
&=
\alpha_l(i)
\left(
1+
\sum_{k=1}^m \mu_{l,i,k}
\,\bigl|\mathbf{y}_l(k)-\mathbf{s}_l(i)\bigr|
\right)
\|\mathbf{x}-\mathbf{z}\|_\infty.
\end{align*}
Define
\begin{align*}
\kappa_{l,i}
\triangleq
\alpha_l(i)
\left(
1+
\sum_{k=1}^m \mu_{l,i,k}
\,\bigl|\mathbf{y}_l(k)-\mathbf{s}_l(i)\bigr|
\right).
\end{align*}
Then
\begin{align*}
\bigl|(f(\mathbf{x})-f(\mathbf{z}))((l-1)n+i)\bigr|
\le
\kappa_{l,i}\,\|\mathbf{x}-\mathbf{z}\|_\infty.
\end{align*}
Taking the maximum over $l$ and $i$ gives
\begin{align*}
\|f(\mathbf{x})-f(\mathbf{z})\|_\infty
\le
\kappa\,\|\mathbf{x}-\mathbf{z}\|_\infty,
\end{align*}
where
\begin{align*}
\kappa \triangleq \max_{l\in[q],\,i\in[n]} \kappa_{l,i}.
\end{align*}
By Assumption~\ref{asm: CC}, we have $\kappa<1$. Therefore $f$ is a contraction on
$\mathbb{I}^{nq}$ under $\|\cdot\|_\infty$. Since $\mathbb{I}^{nq}$ is complete
under $\|\cdot\|_\infty$, the Banach fixed-point theorem (cf. \cite{rudin1991functional}) yields a unique fixed
point and convergence from every initial condition.

\section{Proof of Theorem~\ref{thm:th2}}
\label{Proof of Theorem 2}
Under \eqref{eq: Affine function}, each
\begin{align*}
b_{l,i,k}(\mathbf{x})
=
\Omega_l(i,k)-\Gamma_l(i,k)\beta_l(i,k),
\end{align*}
where
\begin{align*}
\beta_l(i,k)=c_{l,i,k}^{\top}|H_i\mathbf{x}-T_k\mathbf{y}|.
\end{align*}
Because $H_i$ and $T_k$ are selection matrices, $H_i\mathbf{x},T_k\mathbf{y}
\in\mathbb{I}^q$. Since $c_{l,i,k}\ge 0$ and $\mathbf{1}^{\top}c_{l,i,k}=1$,
we have
\begin{align*}
0\le \beta_l(i,k)\le 1
\end{align*}
for all $\mathbf{x}\in\mathbb{I}^{nq}$. Hence \eqref{eq:AffineSufficient1}
implies Assumptions~\ref{asm:Positivity} and \ref{asm: ConvexCondition2}. Moreover,
for $\mathbf{x},\mathbf{z}\in\mathbb{I}^{nq}$,
\begin{align*}
&\bigl|c_{l,i,k}^{\top}|H_i\mathbf{x}-T_k\mathbf{y}|
-c_{l,i,k}^{\top}|H_i\mathbf{z}-T_k\mathbf{y}|\bigr| \\
&\le c_{l,i,k}^{\top}|H_i(\mathbf{x}-\mathbf{z})| \\
&\le \|\mathbf{x}-\mathbf{z}\|_\infty,
\end{align*}
so $b_{l,i,k}$ is Lipschitz with constant $\Gamma_l(i,k)$. Together with
Assumption~\ref{asm:Wrow} and \eqref{eq:kappa-aff}, Theorem~\ref{thm:th1} applies.
Therefore \eqref{eq:CBUR} has a unique fixed point $\mathbf{x}^*$.

Fix a candidate sign pattern
\begin{align*}
\boldsymbol{\theta}=\{\boldsymbol{\theta}_{i,k}\},
\qquad
\boldsymbol{\theta}_{i,k}\in\{-1,0,1\}^q,
\end{align*}
and define
\begin{align*}
\Theta_{i,k}\triangleq \diag(\boldsymbol{\theta}_{i,k}).
\end{align*}
Whenever
\begin{align*}
\sign(H_i\mathbf{x}-T_k\mathbf{y})=\boldsymbol{\theta}_{i,k},
\end{align*}
we have
\begin{align*}
|H_i\mathbf{x}-T_k\mathbf{y}|
=
\Theta_{i,k}(H_i\mathbf{x}-T_k\mathbf{y}),
\end{align*}
and therefore
\begin{align*}
\beta_l(i,k)
&=
 c_{l,i,k}^{\top}|H_i\mathbf{x}-T_k\mathbf{y}| \\
&=
 c_{l,i,k}^{\top}\Theta_{i,k}H_i\mathbf{x}
-
 c_{l,i,k}^{\top}\Theta_{i,k}T_k\mathbf{y}.
\end{align*}
Thus, once the sign pattern is fixed, the dynamics become affine.

For each $l\in[q]$ and $i\in[n]$, define
\begin{align*}
r_{l,i}^{x}(\boldsymbol{\theta})
&\triangleq
\sum_{k=1}^{m}
\Gamma_l(i,k)\bigl(\mathbf{s}_l(i)-\mathbf{y}_l(k)\bigr)
 c_{l,i,k}^{\top}\Theta_{i,k}H_i, \\
r_{l,i}^{y}(\boldsymbol{\theta})
&\triangleq
\sum_{k=1}^{m}
\Gamma_l(i,k)\bigl(\mathbf{y}_l(k)-\mathbf{s}_l(i)\bigr)
 c_{l,i,k}^{\top}\Theta_{i,k}T_k.
\end{align*}
Let $R_l^x(\boldsymbol{\theta})\in\mathbb{R}^{n\times nq}$ and
$R_l^y(\boldsymbol{\theta})\in\mathbb{R}^{n\times mq}$ collect these rows, and
stack them as
\begin{align*}
R^x(\boldsymbol{\theta})
&\triangleq
\begin{bmatrix}
R_1^x(\boldsymbol{\theta})\\
\vdots\\
R_q^x(\boldsymbol{\theta})
\end{bmatrix},
&
R^y(\boldsymbol{\theta})
&\triangleq
\begin{bmatrix}
R_1^y(\boldsymbol{\theta})\\
\vdots\\
R_q^y(\boldsymbol{\theta})
\end{bmatrix}.
\end{align*}
Also define
\begin{align*}
\Omega
&\triangleq \blkdiag(\Omega_1,\dots,\Omega_q), \\
\Sigma_{\Omega}
&\triangleq \blkdiag(\Sigma_{\Omega_1},\dots,\Sigma_{\Omega_q}), \\
A_a
&\triangleq I-A-A\Sigma_{\Omega}, \\
W_a(\boldsymbol{\theta})
&\triangleq A\bigl(W+R^x(\boldsymbol{\theta})\bigr), \\
B_a(\boldsymbol{\theta})
&\triangleq A\bigl(\Omega+R^y(\boldsymbol{\theta})\bigr).
\end{align*}
Then the stacked dynamics reduce to
\begin{align*}
\mathbf{x}[t+1]
=
A_a\mathbf{s}
+
B_a(\boldsymbol{\theta})\mathbf{y}
+
W_a(\boldsymbol{\theta})\mathbf{x}[t].
\end{align*}

It remains to show that $I-W_a(\boldsymbol{\theta})$ is nonsingular for every
candidate pattern. For fixed $l$ and $i$, the row vector
$c_{l,i,k}^{\top}\Theta_{i,k}H_i$ has absolute row sum at most
$\mathbf{1}^{\top}c_{l,i,k}=1$. Hence the $(l,i)$th row of
$AR^x(\boldsymbol{\theta})$ has absolute row sum at most
\begin{align*}
\alpha_l(i)\sum_{k=1}^{m}
\Gamma_l(i,k)\,|\mathbf{y}_l(k)-\mathbf{s}_l(i)|.
\end{align*}
Since each $W_j$ is row-stochastic and $\sum_{j=1}^q\Lambda_{l,j}=I$, the
$((l-1)n+i)$th row sum of $AW$ equals $\alpha_l(i)$. Therefore the absolute row
sum of the $((l-1)n+i)$th row of $W_a(\boldsymbol{\theta})$ is at most
\begin{align*}
\kappa_{l,i}^{\mathrm{aff}}
\triangleq
\alpha_l(i)\left(
1+
\sum_{k=1}^{m}\Gamma_l(i,k)
\,|\mathbf{y}_l(k)-\mathbf{s}_l(i)|
\right).
\end{align*}
By \eqref{eq:kappa-aff},
\begin{align*}
\kappa_{\mathrm{aff}}
\triangleq
\max_{l\in[q],\,i\in[n]} \kappa_{l,i}^{\mathrm{aff}} < 1.
\end{align*}
Hence
\begin{align*}
\|W_a(\boldsymbol{\theta})\|_\infty \le \kappa_{\mathrm{aff}} < 1,
\end{align*}
so $I-W_a(\boldsymbol{\theta})$ is nonsingular.

For a fixed pattern $\boldsymbol{\theta}$, the affine system has the unique
equilibrium
\begin{align*}
\mathbf{x}_{\boldsymbol{\theta}}
=
\bigl(I-W_a(\boldsymbol{\theta})\bigr)^{-1}
\bigl(A_a\mathbf{s}+B_a(\boldsymbol{\theta})\mathbf{y}\bigr).
\end{align*}
Each coordinate of each $\boldsymbol{\theta}_{i,k}$ takes values in
$\{-1,0,1\}$, so the candidate set is finite. The true fixed point $\mathbf{x}^*$
induces one such sign pattern, hence at least one candidate is consistent.
Conversely, if a candidate satisfies
\begin{align*}
\sign\bigl(H_i\mathbf{x}_{\boldsymbol{\theta}}-T_k\mathbf{y}\bigr)
=
\boldsymbol{\theta}_{i,k}
\end{align*}
for all relevant $i$ and $k$, then $\mathbf{x}_{\boldsymbol{\theta}}$ satisfies the
original nonlinear fixed-point equation. By uniqueness of the fixed point, exactly one
candidate passes the consistency check. Therefore
Algorithm~\ref{algo: Fixed Point Affine Functions} returns $\mathbf{x}^*$ after
finitely many steps.

\bibliographystyle{IEEEtran}
\bibliography{ref}

\end{document}